\pgfplotsset{compat=1.14}
\definecolor{red}{rgb}{0.7,0.15,0.15}
\definecolor{green}{rgb}{0,0.5,0}
\definecolor{blue}{rgb}{0,0,0.7}
\makeatletter \@addtoreset{equation}{section}
\newtheorem{theorem}{Theorem}[section]
\newtheorem{assumption}[theorem]{Assumption}
\newtheorem{lemma}[theorem]{Lemma}
\newtheorem{remark}[theorem]{Remark}
\def \P{\mathbb{P}}
\def \R{\mathbb{R}}
\def \Z{\mathbb{Z}}
\def\Dc{{\cal D}}
\def\Fc{{\cal F}}
\title{ How to design a derivatives market?\footnote{This work benefits from the financial support of the Chaires Analytics and Models for Regulation, Financial Risk and Finance and Sustainable Development. The authors gratefully acknowledge the financial support of the ERC Grant 679836 Staqamof. The authors would like to thank Angelique Bégrand, Luxi Chen and Laurent Fournier from Euronext, and Gilles Pagès.}}
\author{Bastien {\sc Baldacci}\footnote{\'Ecole Polytechnique, CMAP, 91128, Palaiseau Cedex, France,  bastien.baldacci@polytechnique.edu}\and 
Paul {\sc Jusselin}\footnote{\'Ecole Polytechnique, CMAP, 91128, Palaiseau Cedex, France, paul.jusselin@polytechnique.edu} \and 
Mathieu {\sc Rosenbaum}\footnote{\'Ecole Polytechnique, CMAP, 91128, Palaiseau Cedex, France, mathieu.rosenbaum@polytechnique.edu} }
\begin{document}

\maketitle
\begin{abstract}

We consider the problem of designing a derivatives exchange aiming at addressing clients needs in terms of listed options and providing suitable liquidity. We proceed into two steps. First we use a quantization method to select the options that should be displayed by the exchange. Then, using a principal-agent approach, we design a make take fees contract between the exchange and the market maker. The role of this contract is to provide incentives to the market maker so that he offers small spreads for the whole range of listed options, hence attracting transactions and meeting the commercial requirements of the exchange.\\

\noindent{\bf Key words:} Make take fees, market making, derivatives, market design, quantization, Lloyd's algorithm, financial regulation, high frequency trading, principal-agent problem, stochastic control 
\end{abstract}

\section{Introduction}\label{Section Introduction}

Nowadays a typical role of an exchange is to give the possibility to investors to buy or sell financial products on electronic platforms, in sufficiently large quantity and at a reasonable price. Therefore exchanges have to set up their markets in a relevant way in order to achieve this goal. The issues related to market design cover a wide range of topics, from the microstructure of electronic trading platforms to the basic question of selecting the products that will be traded on the exchange.\\

Recently many papers have focused on the microstructural aspects of market design. For example the way of choosing an optimal tick size is addressed in \cite{dayri2015largetick}, where the authors study the relations between tick size, volatility and bid-ask bounce frequency. In \cite{budish2015high,jusselin2019optimal}, the relevance of continuous trading and its comparison with a frequent batch auction system is discussed, while market fragmentation is analyzed in \cite{sophie2018market}. Macroscopic features have also been investigated, see for example \cite{kalagnanam2004auctions}, where different market structures are classified with respect to several criteria such as matching mechanism, information feedback and bid structure.\\

Most of the research on market design focuses on stock markets. However, even if exchanges concentrate a large part of their activities on simple products such as stocks or futures, many also offer to their clients the possibility to trade more complex financial instruments such as derivatives. Actually there is very few academic literature on derivatives market design, mostly addressing the relationship between stock and option markets. For example in \cite{mayhew2004exchanges} the authors investigate the factors influencing the selection of stocks for option listing. However, they neither question the optimality of those factors, nor search for more relevant ones. The papers dealing with market design can in fact be separated into two groups: the ones that review and try to understand market practice and those proposing a theoretical framework in order to help exchanges improve their market design. Surprisingly, to our knowledge, there is no paper of the last kind dealing with derivatives market. In this article we propose a first contribution in that direction.\\

We take the realistic point of view of an exchange who wants to organize, or reorganize, its derivatives market. We consider that the market is made of vanilla European options only, that we view as independent of the underlying. By this, we mean that we deal with options that are used as hedging instruments and whose prices are essentially fixed by supply and demand. Finally we suppose that the exchange has access to data allowing for the estimation of the distribution of options market demand. For example, if the exchange already has a derivatives market it can use its own data, otherwise that of other exchanges. We focus on two issues: selecting the options that are going to be traded and attracting liquidity on those options.\\

The first issue faced by the exchange is the choice of the derivatives offered to the clients. Obviously it is impossible for the exchange to propose all maturities and strikes on its platform. This would be very hard to manage from a technical point of view and it would be impossible to guarantee liquidity on each option. As the maturities are quite standardized, the main challenge relies in strikes selection satisfying clients needs. Therefore, we consider that the exchange's problem is to select $n$ call options (or equivalently $n$ strikes), with fixed maturity, with the aim of maximizing the clients satisfaction. We define a measurement of this satisfaction and write the exchange objective under the form of a quantization problems. We refer to \cite{graf2007foundations,pages2004optimal} for an introduction to quantization. Such approach allows the exchange to select automatically a set of options based only on the data at its disposal.\\

The next goal of the exchange is to attract liquidity on its platform in order to increase the amount of executed orders. To do so, one way is to use a make take fees system: the exchange typically associates a fee rebate to executed limit orders, while charging a transaction fee for market orders. This enables it to subsidise liquidity provision and tax liquidity consumption. In \cite{el2018optimal} the authors design the optimal make take fees policy for a market with one market maker and a single undeying asset. This work has been extended in \cite{baldacci2019optimal} to the case of multiple market makers. The general principle of the approach in \cite{baldacci2019optimal,el2018optimal} is to consider that the exchange offers a contract to the market maker whose pay-off depends on the market order flow he generates. The problem of the exchange then boils down into designing the optimal contract in order to optimize the number of transactions.\\

However, in our setting the problem faced by the exchange is more complex to several extents. The main difference with the framework of \cite{baldacci2019optimal, el2018optimal} is that the exchange has to manage several assets simultaneously, namely the different options quoted on the platform. In order to focus on this issue we assume that there is only one market maker setting bid and ask quotes for all available options. Another challenge for a derivatives exchange is the possible absence of quotations for far from the money options (or quotations with a too wide spread). Such issue arises essentially for commercial reasons. Indeed, an exchange does not wish to display to its clients a product with scarse liquidity. It wants to make sure that there is sufficient available volume on the market for the whole range of listed options. Therefore, the design of an optimal make take fees policy for options market must aim at providing incentives to the market maker to lower the spreads, notably for far from the money options.\\

To do so, we are inspired by \cite{baldacci2019optimal,el2018optimal}, using a principal-agent framework. The exchange (the principal) has to design a contract towards the market maker (the agent) that maximizes a certain utility that depends on the behavior of the market maker. The main point is that the market maker's behavior, here the quoted spread on every available option, cannot be dictated by the exchange and depends on the contract. For example if the contract offers high incentives for every executed ask market order, then it is likely that the ask price quoted by the market maker will be close to the mid price. Formally, for a given contract, the market maker determines its behavior by solving a stochastic control problem. Then in order to find the optimal contract, the exchange maximizes its expected utility over the set of admissible contracts, knowing the market maker's response to each contract.\\

The paper is organized as follows. In Section \ref{Section Selection of the strikes} we explain how an exchange can select the options that will be traded on its platform using only market data. Then in Section \ref{Section Incentive policy of the exchange} we design the optimal contract that the exchange should offer to the market maker in order to maximize liquidity. Proofs and technical results are relegated to the Appendix.

\section{Market driven selection of the listed options}\label{Section Selection of the strikes}

In this section we build a method for the exchange to select the strikes that are going to be traded on its platform. This approach uses only data from trades volume reports and is based on a quantization algorithm. We illustrate this method by numerical experiments using data provided by Euronext.

\subsection{How to choose the strikes in order to match market demand?}
\label{subsec:provide}

We consider European call options with strikes expressed in percentage of the spot price (in moneyness) and that the exchange wishes to select $n$ strikes.\footnote{We do not address here the problem of choosing the number of strikes to propose. This point is left for further research.} Choosing relevant strikes, the exchange's objective is to maximize the satisfaction of the investors. So, we focus in this section on the market taking side of the trading flow. Section \ref{Section Incentive policy of the exchange} will be rather devoted to market makers.\\

We measure the regret of a market taker associated to the execution of a market order as a function of the difference between the strike he would have ideally bought (or sold) and the strike he actually bought (or sold). More precisely, for a given maturity, consider strikes $K_1< \dots < K_n$ that represent the options listed by the exchange. When a market taker wants to buy an option with strike $K$ he sends a market order on the option whose strike is the nearest from $K$. Hence he buys (or sells) the option with strike $K_i$ where $i$ is such that  
$$
K_i = \underset{1\leq j \leq n}{\arg\min}|K -K_j|.
$$
We consider that the regret associated to this market order is $\rho(|K- K_i|)$ where $\rho$ is an increasing function. Note that the regret of the market order can be written
$$
\underset{1\leq j \leq n}{\min}\rho(|K - K_j|).
$$
We finally assume that the strike $K$ is randomly chosen according to the distribution $\mathbb{P}^{mkt}$. This probability measure represents the law of market demand. Thus the higher the demand for a given strike the higher the probability that $K$ is close to this strike. The exchange can easily estimate the distribution $\mathbb{P}^{mkt}$ using data from its own options market or from other exchanges. The average regret of a market order is therefore written
\begin{equation}
\label{eq:obj_function}
\mathbb{E}^{mkt}[\underset{1\leq j \leq n}{\min}\rho(|K - K_j|)],
\end{equation}
where $\mathbb{E}^{mkt}$ denotes the expectation when $K\sim\mathbb{P}^{mkt}$. The problem of the exchange is then to find the $n$-uplet $(K_i)_{1\leq i \leq n}$ that minimizes \eqref{eq:obj_function}. Formally this corresponds to the following minimization problem:
\begin{equation}
\label{eq:quantization_problem}
\underset{K_1\leq \dots \leq K_n}{\arg\min} \mathbb{E}^{mkt}[\underset{1\leq j \leq n}{\min}\rho(|K - K_j|)].
\end{equation}
This type of optimization is classical in the field of signal or image processing and is called \textit{quantization} problem. The main idea of quantization is to summarize the information contained in a complex probability measure into a uniform probability with finite support. As an example, it allows to compress a signal (or an image) by selecting among its spectrum a given number of frequencies that summarizes the signal with the smallest possible loss of information. For an introduction to quantization problem see \cite{graf2007foundations, pages2004optimal}.\\

In this article we consider the quantization problem \eqref{eq:quantization_problem} when $\rho$ is a power-law function of the form $\rho(x) = |x|^{p}$ with $p\geq 2$. The power-law function has the advantage to be symmetric and convex. Therefore greater errors are increasingly penalized. As a consequence we expect the solution of \eqref{eq:quantization_problem} to capture the features of the tails of $\mathbb{P}^{mkt}$. Moreover the greater $p$, the more large errors are penalized. Hence for a large $p$, the $(K_i)_{1\leq i \leq n}$ solution of \eqref{eq:quantization_problem} are likely to be more spread towards large strikes and contain more extreme values of the distribution $\mathbb{P}^{mkt}$.

\subsection{Solving the quantization problem}
\label{subsec:solve_quantization}
 In this section we give some sufficient conditions that ensure that \eqref{eq:quantization_problem} has a unique solution. We also explain how \eqref{eq:quantization_problem} can be solved. \\

To get existence of a solution to the problem \eqref{eq:quantization_problem} we need to make the following assumption.
\begin{assumption}
\label{assumption:quantization} The probability $\mathbb{P}^{mkt}$ is absolutely continuous with respect to the Lebesgue measure with density that is log-concave and compactly supported in $[0,\overline{K}], \overline{K}>0$.
\end{assumption}
The assumption on the support of the probability is very reasonable since strikes between $0$ and $200\%$ of the spot price basically cover all the possible strikes of traded options. The log-concavity assumption is not really restrictive since it allows us to consider a wide class of probability distributions such as exponential type and Gaussian laws. It is shown in \cite[Theorem I-5.1]{graf2007foundations} that under Assumption \ref{assumption:quantization}, Problem \eqref{eq:quantization_problem} admits a unique non degenerate solution. The term non degenerate simply means that the optimal set of strikes satisfies $K_1<\dots < K_n$.\\

We now present a way to approximate numerically the solution of \eqref{eq:quantization_problem}. The idea behind the algorithm is that the solution $(K_i)_{1\leq i \leq n}$ can be seen as the fixed point of a function. This provides us a numerical method to approximate the $(K_i)_{1\leq i \leq n}$ that consists in iterating this function. This is known as the Lloyd's algorithm, which is a very intuitive approach that searches step by step the solution of \eqref{eq:quantization_problem}. A very convenient aspect of this algorithm is that it is automatic and easy to implement.\\

The Lloyd's algorithm starts with an initial set of strikes $(K_i)_{1\leq i \leq n}$ and is made of three steps:
 \begin{enumerate}
\item For any $i$, identify $A_i$ the set of "wished" strikes that corresponds to market orders sent to the strike $K_i$. Equivalently $A_i$ contains all the strikes $K$ which are closer to $K_i$ than from any other $K_j$
$$
A_i = \{K,\text{ s.t }i = \underset{1\leq j \leq n}{\arg\min}|K - K_j| \}.
$$
\item Set $K_i'$ as the unique strike in $A_i$ that minimizes the average regret of market orders sent with ideal strike in $A_i$. More precisely $K'_i$ is given by
$$
K'_i = \underset{k \in A_i}{\arg\min}~\mathbb{E}^{mkt}[|K-k|^p\mathbf{1}_{K\in A_i}].
$$
\item Go back to Step $1$ with $(K_i)_{1\leq i \leq n} = (K'_i)_{1\leq i \leq n}$ (or stop if a certain stopping criterion is reached and consider $(K'_i)_{1\leq i \leq n}$ as the approximate solution of \eqref{eq:quantization_problem}).
\end{enumerate}

The  Lloyd's algorithm has a very clear interpretation in terms of selecting the optimal set of strikes: first it identifies the area "controlled" by the $i-th$ strike and then improves the choice of the strikes. It is then intuitive that the solution of \eqref{eq:quantization_problem} is a fixed point of the Lloyd's algorithm. The sets $(A_i)_{1\leq i \leq n}$ form a covering of $\mathbb{R}_+$ that is often called the Vorono\"i tesselation associated to the $(K_i)_{1\leq i \leq n}$. It is easy to show that, for Step 1
$$
A_1 = [0, K_1],~~A_n = [K_n, \overline{K}]\text{ and for }i\in \{2, \dots, n-1\}:~~A_i = [\frac{K_i+K_{i-1}}{2}, \frac{K_{i+1}+K_{i}}{2}].
$$
A usual stopping criterion for Step $3$ is when $(K'_i)_{1\leq i \leq n}$ is too close from $(K_i)_{1\leq i \leq n}$. More precisely the algorithm stops if
$$
\sum_{i= 1}^n |K'_i - K_i|< \varepsilon,
$$
for a certain $\varepsilon >0$. Note that, starting from a discrete valued $\mathbb{P}^{mkt}$ (as will be the case here), when $p=2$, Step $2$ of the Lloyd's algorithm boils down to compute the average realization of $\mathbb{P}^{mkt}$ conditional on being in $A_i$. This can be obtained instantaneously. However when $p>2$, Step $2$ is not straightforward to compute in general. Yet the objective function being convex and taking the derivative with respect to $k$, a necessary and sufficient condition for $k$ to be solution of Step $2$ is
\begin{align*}
\mathbb{E}[|K-k|^{p-2}(K-k)\mathbf{1}_{K\in A_i}]=0    
\end{align*}
or equivalently
\begin{align*}
k = \frac{\mathbb{E}^{mkt}[K |K-k|^{p-2}\mathbf{1}_{K\in A_i}]}{\mathbb{E}^{mkt}[|K-k|^{p-2}\mathbf{1}_{K\in A_i}]}.    
\end{align*}
This characterizes the solution of Step $2$ as a fixed point. Thus one usually replaces Step $2$ by its iterative version: 
$$
K'_i = \frac{\mathbb{E}^{mkt}[K |K-K_i|^{p-2}\mathbf{1}_{K\in A_i}]}{\mathbb{E}^{mkt}[|K-K_i|^{p-2}\mathbf{1}_{K\in A_i}]}.
$$
From now on, we call Lloyd's algorithm the initial algorithm where we replace Step $2$ by its approximate version. We prove in Appendix \ref{assumption:quantization} that $(K_i)_{1\leq i \leq n}$ is solution of \eqref{eq:quantization_problem} if and only if it is a fixed point of the Lloyd's algorithm. The great strength of this method is that it is easy to implement, transparent, and completely automatic. Note also that if $\mathbb{P}^{mkt}$ has a discrete support, say $10$ strikes, then the Lloyd's algorithm will not necessarily select those strikes as solution of \eqref{eq:quantization_problem}.\\

We now turn to numerical experiments illustrating the efficiency of our method.

\subsection{Application}
\label{subsec:application}

In this section we apply our methodology to market data. First we describe the data and then present our numerical results.

\subsubsection{Description of the data}
\label{subsubsec:empirical_analysis}

We use data from Euronext, one of the main stock and option exchanges in Europe. The dataset contains for every trading day from the $3$-rd of December 2018 to the $24$-th of May 2019 and for every available options the total number of trades (buy and sell) during the day. Our dataset is only made of transactions that occurred on the Euronext platform. In particular we neither use OTC data nor data from another exchange. We choose for our example the most standard call options in terms of underlying on Euronext, namely options on the CAC 40 index. We report in Table \ref{tab:info_maturity} the number of call options traded each month for different ranges of maturity and in Table \ref{tab:info_strike} the number of call options traded each month for each strike.\\

In Figure \ref{fig:data_description}, we display the empirical distribution of traded option strikes (for all maturities) and the quantile plot of the maturity distribution in log-scale. The distribution of the strikes is unimodal, concentrated near the money and skewed towards in the money strikes. In Figure \ref{fig:data_description_b}, we provide the empirical distribution of traded options strikes for different ranges of maturity. We see that the distribution of the strikes depends on the maturity. In particular, the variance of the distribution is increasing with the maturity. The skewness towards in the money strikes is present for any maturity.

\begin{figure}[tbph!]
\begin{center}
\includegraphics[width=8cm,height=6cm]{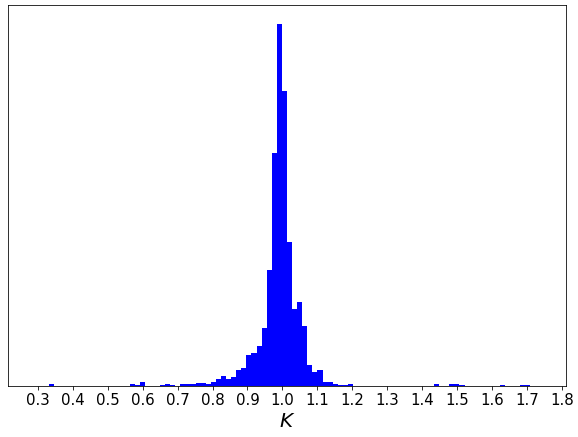}
\includegraphics[width=8cm,height=6cm]{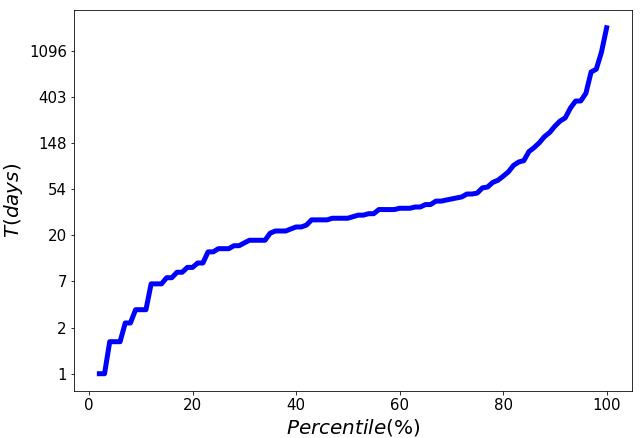}
\caption{Empirical distribution of traded option strikes (left). Quantile plot in log-scale of traded option maturities for the whole sample set (right).}
\label{fig:data_description}
\end{center}
\end{figure}

\begin{figure}[tbph!]
\begin{center}
\includegraphics[width=15cm,height=9cm]{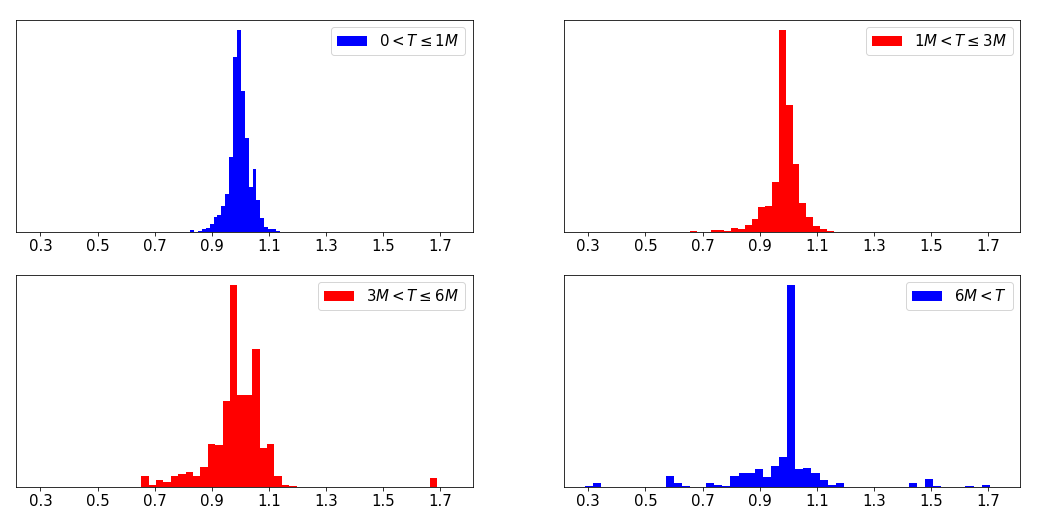}
\caption{Empirical distribution of the strikes for different maturities.}
\label{fig:data_description_b}
\end{center}
\end{figure}

\begin{table}[tbph!]
\centering
\begin{tabular}{c|cc|cc|cc}
 Maturity &  December&        January&       February&        March &      April &      May\\
 \hline
T$\leq$ 1M&  135951  &        99202  &       96323   &        191357&      161937&      108491\\
1M<T$\leq$ 3M&  79016   &        61651  &       30371   &        117400&      58914 &      121267\\
3M<T$\leq$ 6M&  10990   &        13279  &       15979   &        33901 &      11227 &      11779\\
6M < T&  71977   &        30278  &       14197   &        17158 &      25354 &      21330
\end{tabular}
\caption{Number of options traded by maturity and month.}
\label{tab:info_maturity}
\end{table}

\begin{table}[tbph!]
\centering
\begin{tabular}{c|cc|cc|cc}
  Strike (\%) &   December&        January&       February&        March &      April &      May\\
  \hline
20 &   0       &        0      &       0       &        0     &      55    &      10\\
30 &   1       &        1692   &       2       &        381   &      0     &      0\\
40 &   0       &        77     &       0       &        80    &      3     &      41\\
50 &   58      &        417    &       0       &        328   &      2031  &      1948\\
\hline
60 &   1933    &        152    &       31      &        323   &      691   &      2092\\
70 &   1402    &        1928   &       653     &        3837  &      2412  &      2956\\
80 &   12814   &        12952  &       3400    &        10118 &      14689 &      12147\\
90 &   113210  &        114463 &       10465   &        247877&      184835&      147362\\
\hline
100&   159075  &        68747  &       130002  &        94714 &      50621 &      90528\\
110&   5811    &        3586   &       12253   &        1766  &      83    &      2205\\
120&   869     &        94     &       64      &        11    &      0     &      16\\
130&   1       &        11     &       0       &        0     &      0     &      0\\
\hline
140&   0       &        0      &       0       &        0     &      2012  &      1960\\
150&   0       &        0      &       0       &        381   &      0     &      1602\\
160&   1720    &        271    &       0       &        0     &      0     &      0\\
170&   1040    &        20     &       0       &        0     &      0     &      0
\end{tabular}
\caption{Number of options traded by strike and month.}
\label{tab:info_strike}
\end{table}

\subsubsection{Numerical results}
\label{subsubsec:numerical_result_quantization}

We now present our numerical results. Since the distribution of the strikes depends on the maturity and because short maturities are over-represented in our data, we split our dataset into four subsets depending on the maturity: 
\begin{itemize}
\item maturity less than $1$ month,
\item maturity between $1$ and $3$ months,
\item maturity between $3$ and $6$ months,
\item maturity larger than $6$ months.
\end{itemize}

For any of those subsets we approximate the solution of the quantization problem \eqref{eq:quantization_problem} using the Lloyd's algorithm for $n=10$ and with stopping parameter $\varepsilon = 10^{-8}$. As initial value, we use $n$ points $(K_i)_{1\leq i\leq n}$ generated with uniform law between the $10$-th and $90$-th percentile of the dataset. In Figures \ref{fig:quantization_result_2} and \ref{fig:quantization_result_8} we plot a visualization of the quantization of the different sets obtained for $p=2$ and $p=8$.\\

\begin{figure}[]
\begin{center}
\includegraphics[width=15cm,height=9cm]{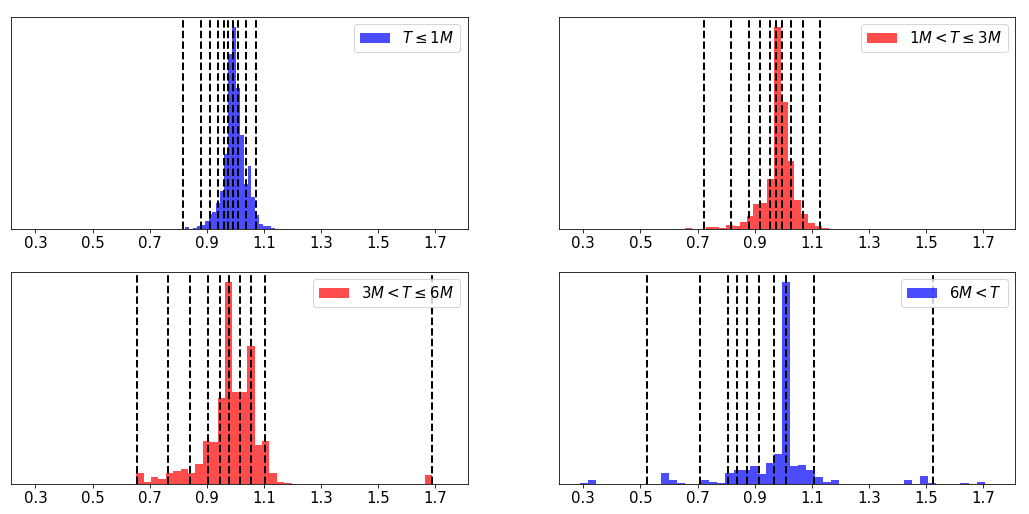}
\caption{Quantization of the option strikes using $p=2$ and $\varepsilon = 10^{-8}$. Empirical distribution of traded strikes is plotted in blue or red. The dotted lines correspond to the optimal quantization of $\mathbb{P}^{mkt}$.}
\label{fig:quantization_result_2}
\end{center}
\end{figure}

\begin{figure}[]
\begin{center}
\includegraphics[width=15cm,height=9cm]{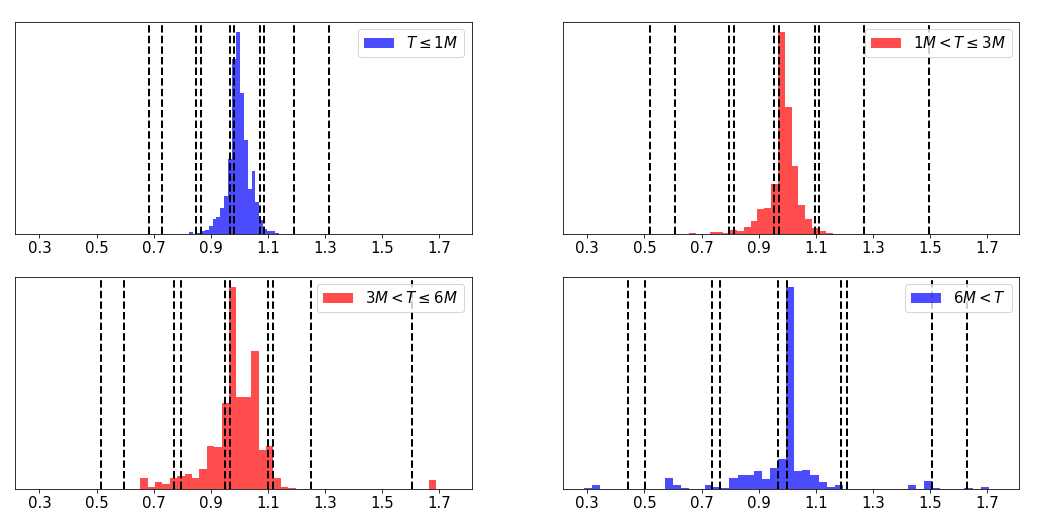}
\caption{Quantization of the option strikes using $p=8$ and $\varepsilon = 10^{-8}$. Empirical distribution of traded strikes is plotted in blue or red. The dotted lines correspond to the optimal quantization of $\mathbb{P}^{mkt}$.}
\label{fig:quantization_result_8}
\end{center}
\end{figure}

The strikes selected by the Lloyd's algorithm manage to reproduce some of the statistical properties of the demand distribution $\mathbb{P}^{mkt}$. In particular, for any range of maturity, the distribution of the $(K_i)_{1\leq i \leq n}$ is skewed towards in the money strikes. Also the variance of the selected strikes is increasing with the maturity as for market data.\\

We observe that for $p=8$ the strikes selected by the quantization method are more spread towards large strikes than for $p=2$. This is not surprising since the penalization of large errors is increasing with $p$ for the regret function $|\cdot|^p$. Therefore, as expected, the larger $p$, the more the solution of the quantization problem \eqref{eq:quantization_problem} contains extreme values of the distribution $\mathbb{P}^{mkt}$. We also note that the selected strikes for $p=8$ exhibit some kind of redundancy: some of them are very close to each other. In practice, one would of course discard one of two strikes being very close (it may then be interesting to take a smaller $n$). For practical applications, the easiest approach is probably to use $p=2$. With this choice, the Lloyd's algorithm is very fast and easy to implement. It also corresponds to the most documented case.\\

Finally we insist on the fact that when an exchange uses our methodology for strikes selection, it is interesting, if possible, to include transactions from other exchanges and from the OTC market in the dataset. This is because using only its own trade data may induce a bias in the strikes selection. For example if for some reasons clients of an exchange go on other venues to buy (or sell) out of the money options, then, in the exchange dataset, there will be very few transactions reported on out of the money options. This will lead to inaccuracies since the demand for out of the money options will be underestimated. \\

We now turn to the problem of providing incentives to the market maker to quote attractive spreads in order to attract liquidity towards the selected options.

\section{Incentive policy of the exchange}
\label{Section Incentive policy of the exchange}

In this section, we assume that the exchange has already selected a list of options. The goal is to design a contract between the exchange and the market maker so that the latter receives incentives to provide suitable liquidity on all the options. We first describe the market and assumptions. In particular, due to the short time horizon we are working on, we can assume a Bachelier model for the underlying asset and constant delta for the options. Then, we introduce a class of tractable admissible contracts proposed to the market maker. These contracts are indexed on the transactions induced by the behavior of the market maker. We show that there is no loss of generality in considering such class of contracts. For a given contract, the market maker solves an optimization problem to deduce its optimal quotes for each option. Then, the exchange maximizes his expected utility over the set of admissible contracts, knowing the response of the market maker to a given contract.\\

The utility of the exchange is made of two parts: one component related to the actual Profit and Loss (PnL for short) due to transactions, and one aiming at ensuring that enough liquidity is constantly posted on every option. As explained in the introduction, this second component addresses commercial constraints in order to make the exchange competitive. In particular, our model is flexible and can be designed so that the exchange has more interest in reducing the spreads for far from the money options, although not very traded, than for near the money options. We derive explicitly the optimal incentives that should be offered, up to the resolution of a two-dimensional linear PDE.\\

We conclude this section with numerical results showing the impact of the incentive policy on the spread of the listed options.

\subsection{The market}
\label{Subsection The Market}

This section is devoted to the description of the market model.\\

We consider a finite trading horizon time $T>0$ and a probability space $(\Omega,\mathcal{F},\mathbb{P}^{0})$ under which all stochastic processes are defined. Following Section \ref{Section Selection of the strikes}, we work on a market where European call options with strike $k\in \mathcal{K}:=\{K_1, \dots, K_n\}$ and maturity $\tau \in \mathcal{T}:= \{T_1, \dots, T_m\}$ can be traded. We focus on call options but our results can be extended to put options in a straightforward manner. The price of the underlying, observable by all market participants, has a dynamic given by
\begin{align}
\label{Definition efficient price}
\mathrm{d}S_{t}=\sigma \mathrm{d}W_{t},
\end{align}
where $\sigma>0$ is the volatility of the asset and $W$ is a one-dimensional Brownian motion. The choice of an arithmetic Brownian motion is motivated by the fact that we use a reasonably short time horizon $T$ (less than one day). On such scale, Bachelier and Black-Scholes type dynamics are quite indistinguishable.\\

Assuming zero interest rate, we write the price at time $t$ of the call option with maturity $\tau$ and strike $k$ as $C_{t}^{k,\tau}$. Its dynamic is given by
\begin{align}\label{Definition dynamics option}
\mathrm{d}C_{t}^{k,\tau}=\sigma \Delta^{k,\tau}_t \mathrm{d}W_{t},
\end{align}
where $\Delta^{k, \tau}_t:=\mathcal{N}(d_t)$ is the Bachelier delta of the call option $C^{k,\tau}$ at time $t$, $\mathcal{N}(\cdot)$ is the cumulative distribution function of the standard Gaussian law and $d_t:=\frac{S_t-k}{\sigma\sqrt{\tau}}$.\\

As we work over a short time horizon, the delta of the quoted options does not vary significantly. Hence, throughout the paper, we assume it to be constant.
\begin{assumption}\label{Assumption delta constant}
We consider that
$$
\Delta^{k, \tau}_t = \Delta^{k, \tau}.
$$
\end{assumption}
The delta of each option therefore becomes a model parameter. If the exchange observes a significant underlying price move, he can recalibrate the deltas, which will lead to a different pay-off of the contract for the market maker. \\

The market maker displays bid and ask quotes on the listed options. The market maker best bid price and best ask price at time $t$ on the option with maturity $\tau$ and strike $k$ are
\[
P_{t}^{k,\tau,b}:=C_{t}^{k,\tau}-\delta_{t}^{k,\tau,b}, \; P_{t}^{k,\tau,a}=C_{t}^{k,\tau}+\delta_{t}^{k,\tau,a}, \; t\in[0,T],
\]
where the superscript $b$ (resp. $a$) stands for bid (resp. ask). So we consider that the market maker controls the spreads $\delta^{k, \tau}:=(\delta^{k, \tau, a}, \delta^{k, \tau, b})$ on each option. The set of admissible controls for the market maker is therefore defined as
\begin{align}
\mathcal{A}:=\big\{(\delta_{t})_{t\in [0,T]}=(\delta_{t}^{k,\tau,i})_{t\in [0,T]}, k\in \mathcal{K}, \tau\in \mathcal{T}, i\in \{a,b\}, \text{ predictable and s.t } |\delta_{t}^{k,\tau,i}|\leq \delta_{\infty} \big\},
\label{Admissible controls market maker}
\end{align}
where $\delta_\infty>0$ is a constant, assumed to be large enough to satisfy technical conditions (see Appendix \ref{proof:principal_verification}). In practice it is of course not restrictive to assume that the spreads are bounded.\\

We now describe the dynamics of the market order flow. For every listed option, the arrival of ask (resp. bid) market orders is modeled by a point process $N^{k,\tau,a}$ (resp. $N^{k,\tau,b}$). We expect the intensity of buy (resp. sell) market order arrivals to be a decreasing function of both the spread quoted by the market maker $\delta^{k,\tau}$ and the transaction cost $f^{k,\tau}$ collected by the exchange. This has quite natural interpretation as a wider spread or higher fee decreases the number of transactions on the considered option. Moreover, we know from the literature (see \cite{dayri2015largetick}, \cite{madhavan1997security} and \cite{wyart2008relation}) that the average number of trades per unit of time for single assets is a decreasing function of the ratio between spread and volatility. Assuming same kind of behavior for the options, this leads to the following form of the intensity function:
\[
\lambda^{k,\tau}(\delta_{t}^{k,\tau,i}):=A \exp\Big(-\frac{C}{\sigma} (\delta_{t}^{k,\tau,i}+f^{k, \tau}) \Big),\; 
\]
where $A$ and $C$ are positive constants that can be calibrated using market data, and $f^{k, \tau}$ represents the fee fixed by the exchange for each market order. Furthermore, we assume that all market orders are of unit size. \\

The main difficulty in our framework is that the market maker is dealing with multiple derivatives. If the market maker strategy depends on its inventory on each option, then the problem lies in dimension $n$, which becomes intricate for large $n$. However, we will see that we can circumvent this issue since in our case we can aggregate the risk factors related to the inventories through the delta weighted cumulated inventory: 
\begin{align}\label{Inventory Process}
\mathcal{Q}_t := \sum_{(k,\tau)\in \mathcal{K}\times\mathcal{T}} \Delta^{k, \tau} Q_t^{k, \tau},
\end{align}
where $Q_{t}^{k,\tau}:=N_{t}^{k,\tau,b}-N_{t}^{k,\tau,a}$ is the number of options $C^{k, \tau}$ held by the market maker at time $t$. Each inventory is weighted by the corresponding~$\Delta$ (see Section \ref{Subsection optimal contract market maker} for details). Thus, the quantity $\mathcal{Q}$ represents the marked-to-market value of the market maker's portfolio. It therefore contains the market risk carried by the market maker. For example an out of the money option will account for a small part of the total risk, and conversely for in the money options. Finally we consider that the market maker has a critical absolute inventory $\overline{q} \in \mathbb{N}$. The intensity of the orders arrival is then
\begin{align*}
\lambda^{k,\tau,i}:=\lambda^{k,\tau}(\delta_{t}^{k,\tau,i})\mathbbm{1}_{\{\phi(i)\mathcal{Q}_{t^{-}}>-\overline{q}\}} \text{ with } \phi(i):=  \left\{
    \begin{array}{ll}
        1 \text{ if } i=a  \\
        -1 \text{ if } i=b. 
    \end{array}
\right.
\end{align*}

\begin{remark}
Note that there is a direct link between the spread quoted by the market maker and his inventory process. Indeed a lower spread $\delta^{k,\tau,b}$ (resp. $\delta^{k,\tau,a}$)  on the bid (resp. ask) side of the listed option $C^{k,\tau}$ increases the intensity of orders arrival  $\lambda^{k,\tau,b}$ (resp. $\lambda^{k,\tau,a}$). This leads to an increase (resp. decrease) of the inventory process $Q^{k,\tau}$. In other words, the market maker skews his quotes depending on the level of its aggregated inventory.
\end{remark}

 \subsection{Market maker's problem and contract representation}\label{Subsection optimal contract market maker}
 
In this section we exhibit the class of contracts used by the exchange. We also explain and solve the market maker's problem for any admissible contract.\\

The PnL of the market maker is defined as the sum of the cash earned from his executed orders and of the value of his inventory on each traded option. Thus, using that $\sum_{(k,\tau)\in\mathcal{K}\times\mathcal{T}}Q_t^{k,\tau}\Delta^{k,\tau}S_t=\mathcal{Q}_t S_t$, it writes
\begin{align}
PL_{t}^{\delta}:=\mathcal{W}_{t}^{\delta}+\mathcal{Q}_t S_t,
\end{align}
where 
\begin{align*}
\mathcal{W}_{t}^{\delta}:=\sum_{(k,\tau)\in \mathcal{K}\times\mathcal{T}}\int_{0}^{t}P_{u}^{k,\tau,a}\mathrm{d}N_{u}^{k,\tau,a}-\int_{0}^{t}P_{u}^{k,\tau,b}\mathrm{d}N_{u}^{k,\tau,b}    
\end{align*}
stands for his cash process at time $t\in[0,T]$. This expression shows the relevance of the variable $\mathcal{Q}$ for the market maker. It represents the volatility of the market maker's PnL with respect to the underlying price movements. Using \eqref{Definition dynamics option}, a direct integration by parts leads to the following form of the PnL process:
\begin{align}
\nonumber
 PL_{t}^{\delta}& :=\underset{i\in\{a,b\}}{\sum}\sum_{(k,\tau)\in \mathcal{K}\times\mathcal{T}}\int_{0}^{t}\delta_{u}^{k,\tau,i}\mathrm{d}N_{u}^{k,\tau,i}+\mathcal{Q}_u\mathrm{d}S_u.
\end{align}

Moreover, the exchange offers to the market maker a contract $\xi$, namely an $\mathcal{F}_{T}$-measurable random variable, which is added to his PnL at the end of the trading period. This contract aims at incentivizing the market maker to reduce the spread quoted for each option. More details will be given in Section \ref{Section Principal's problem}. The contract depends on all the transactions occuring between time $0$ and time $T$, as well as on the efficient price moves.  \\

Thus taking an exponential utility function, the market maker maximizes the following functional of his wealth: 
\begin{align}\label{Market Maker's Problem}
V_{\text{MM}}(\xi):=\underset{\delta\in \mathcal{A}}{\text{sup }}\mathbb{E}^{\delta}\bigg[-\text{exp}\Big(-\gamma\big(\xi+PL_{T}^{\delta}\big)\Big)\bigg],
\end{align}
where $\gamma>0$ denotes the market maker's risk aversion parameter and $\mathbb{E}^{\delta}$ the probability measure associated to a given control process $\delta\in \mathcal{A}$, see Appendix \ref{subsubsection probability measure} for details. For the well-posedness of Equation \eqref{Market Maker's Problem}, we need integrability conditions on the contract $\xi$, see Appendix \ref{subsection wellposedness} for details.\\

Finally we consider that the market maker accepts a contract $\xi$ only if its associated optimal expected utility $V_{\text{MM}}(\xi)$ is above some fixed threshold $R<0$. This threshold, called reservation utility of the agent, is the critical utility value under which the market maker has no interest in the contract. This quantity has to be taken into account carefully by the exchange before proposing a contract to the market makers.\\

We now introduce the class of contracts proposed to the market maker. Given $Y_0>0$, and predictable processes $Z:=(Z^{C^{k,\tau}},Z^{k,\tau,i})_{k\in\mathcal{K}\; \tau\in\mathcal{T}\; i\in\{a,b\}}\in\mathcal{Z}$ (see Appendix \ref{subsection wellposedness} for a definition of $\mathcal{Z}$), we introduce a special class of remuneration $\xi=Y_{T}^{Y_{0},Z}$ of the form
\begin{align}\label{Process Yt}
 Y_{T}^{Y_{0},Z}\!:=&Y_{0}\!+\!\!\int_{0}^{T} \!\bigg(\!\underset{i=a,b}{\sum}\!\sum_{(k,\tau)\in \mathcal{K}\times\mathcal{T}}\!\!\!\!\!Z_{r}^{k,\tau,i}\mathrm{d}N_{r}^{k,\tau,i}\!\!+\!\! Z_{r}^{C^{k,\tau}}\!\!\mathrm{d}C_{r}^{k,\tau}\!\!\bigg)\!+\!\Big(\frac{1}{2}\gamma\sigma^{2}\Big(\!\!\!\!\!\!\!\!\sum_{(k,\tau)\in \mathcal{K}\times\mathcal{T}}\!\!\!\!\!\!\! \Delta^{k,\tau}(Z_{r}^{C^{k,\tau}}\!\!\!\!\!+\!\!Q_{r}^{k,\tau})\Big)^{2}\!\!\!-\!\! H(Z_{r},Q_{r}\!)\!\Big)\mathrm{d}r,
\end{align}
where for $(z,q) \in \mathbb{R}^{2\times \#\mathcal{K}\times \#\mathcal{T} }\times\mathbb{R}$ with $z:=(z^{k,\tau})_{(k,\tau)\in \mathcal{K}\times\mathcal{T}}$, the function $H$, called Hamiltonian of the market maker, is defined by\footnote{This Hamiltonian term appears naturally when applying the dynamic programming principle for the market maker's problem.} 
$$
H(z,q):=\underset{\delta\in \mathbb{R}^{2\times \#\mathcal{K}\times \#\mathcal{T}}}{\text{sup }}h(\delta,z,q)
$$
with
\begin{align*}
&h(\delta,z,q):=\sum_{i=a,b}\sum_{(k,\tau)\in\mathcal{K}\times\mathcal{T}}\gamma^{-1}\bigg(1-\text{exp}\Big(-\gamma\big(z^{k,\tau,i}+\delta^{k,\tau,i}\big)\Big)\bigg)\lambda^{k,\tau}(\delta^{k,\tau,i})\mathbbm{1}_{\{\phi(i)\mathcal{Q}>-\overline{q}\}}.
\end{align*}

Actually, it turns out that it is enough to consider contracts of the form \eqref{Process Yt}. More precisely, we show that any admissible contract (in the sense of the integrability conditions specified in Appendix \ref{Condition wellposedness control problems}), is of this form. We have the following lemma proved in Appendix \ref{Proof equivalence admissible contracts}.

\begin{lemma}
\label{Lemma Contract Representation}
Any contract $\xi$ satisfying \eqref{Condition wellposedness control problems} has a unique representation $\xi=Y_{T}^{Y_{0},Z}$ for some $\big(Y_{0},Z\big)\in\mathbb{R}\times\mathcal{Z}$. 
\end{lemma}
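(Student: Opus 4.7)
The plan is to recognize that the representation \eqref{Process Yt} identifies $Y_t^{Y_0,Z}$ as the solution at time $t$ of a BSDE with jumps whose terminal value is $\xi$. In differential form, $Y^{Y_0,Z}$ satisfies
$$
dY_t = \sum_{i,k,\tau} Z_t^{k,\tau,i}\, dN_t^{k,\tau,i} + \sum_{k,\tau} Z_t^{C^{k,\tau}}\, dC_t^{k,\tau} - F(Z_t, Q_t)\,dt,
$$
with driver $F(z,q) := H(z,q) - \tfrac12\gamma\sigma^2\big(\sum_{k,\tau}\Delta^{k,\tau}(z^{C^{k,\tau}}+q^{k,\tau})\big)^2$. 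Thus asserting that every admissible $\xi$ writes as $Y_T^{Y_0,Z}$ is equivalent to asserting existence and uniqueness of a solution $(Y,Z)$ to this BSDE with terminal condition $Y_T = \xi$, the constant $Y_0$ appearing as the (deterministic) initial value.

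First I would work under the reference probability $\mathbb{P}^0$ introduced in Appendix~\ref{subsubsection probability measure}, under which the filtration $\mathcal{F}$ is generated by $W$ and the point processes $N^{k,\tau,i}$. On this filtration, the martingale representation theorem decomposes any integrable $\mathcal{F}_T$-measurable random variable uniquely as a constant plus stochastic integrals with respect to $W$ and the compensated jump measures $\tilde N^{k,\tau,i}$. This yields candidate integrands that must then be matched to $(Z^{C^{k,\tau}},Z^{k,\tau,i})$: since $dC^{k,\tau}=\sigma\Delta^{k,\tau}dW$, only the $\Delta$-weighted sum $\sum_{k,\tau}\Delta^{k,\tau}Z^{C^{k,\tau}}$ is identifiable, so the space $\mathcal{Z}$ is understood as fixing a canonical representative (e.g.~equality classes modulo null $\Delta$-weighted combinations), which renders the decomposition unique.

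Second, to handle the quadratic term $\tfrac12\gamma\sigma^2(\cdots)^2$ in the driver, I would apply the exponential change of variable $\widetilde Y_t := -\exp(-\gamma Y_t)$. Itô's formula cancels the quadratic contribution against the Brownian quadratic variation of $\sum_{k,\tau}\Delta^{k,\tau}Z^{C^{k,\tau}}dW$, and rewrites the Hamiltonian term in linear form through the jump component. The resulting equation is a BSDE with jumps of linear growth whose terminal value is $-\exp(-\gamma\xi)$, and for which existence and uniqueness of a solution in the relevant weighted $L^2$ spaces follow from standard results on BSDEs driven by Brownian motion and random measures (e.g.~Tang--Li, Kazi-Tani--Possama\"i--Zhou), provided $\xi$ satisfies the exponential integrability built into~\eqref{Condition wellposedness control problems}. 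Inverting the change of variable yields the sought $(Y_0,Z)$ and, by uniqueness of the BSDE solution and of the exponential map, the representation is unique in the admissible class.

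The main obstacle is verifying that the integrands obtained from the martingale representation, once one undoes the exponential transform, actually lie in $\mathcal{Z}$: the jump component becomes $\exp(-\gamma Z^{k,\tau,i})-1$ under the transform, so one must check that the integrability conditions in \eqref{Condition wellposedness control problems} are strong enough to guarantee the required a~priori estimates on $Z^{k,\tau,i}$ (typically a BMO-type bound suffices since the controls $\delta$ are bounded by $\delta_\infty$, keeping the intensities bounded and the driver effectively Lipschitz after transformation). Once these estimates are in place, uniqueness of the BSDE solution, combined with the deterministic nature of $\mathcal{F}_0$, gives uniqueness of the pair $(Y_0,Z)$.
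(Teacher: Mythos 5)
Your strategy is genuinely different from the paper's. The paper never invokes a BSDE existence theorem: it constructs the process $Y$ directly from the market maker's continuation utility, using the dynamic programming principle of Lemma \ref{Lemma 6.2} to show that $U^\delta_t=V_t\mathcal{D}_{0,t}(\delta)$ is a $\mathbb{P}^{\delta}$-supermartingale for \emph{every} admissible $\delta$, then applies Doob--Meyer and the martingale representation theorem under $\mathbb{P}^{\delta}$, and finally identifies the finite-variation part with the Hamiltonian drift by an optimality argument (the nondecreasing parts must vanish because the supremum over $\delta$ of the expected utilities equals $V_0$); admissibility $Z\in\mathcal{Z}$ and uniqueness are then checked from the negativity and integrability bounds on the value process and from the verification argument of Theorem \ref{Theorem Market Maker}. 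Your route instead reads \eqref{Process Yt} as a BSDE with terminal condition $\xi$ and tries to get existence and uniqueness from general BSDE theory after the exponential transformation.

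The gap is in that existence step, and it is not a technicality. After setting $\widetilde Y=-e^{-\gamma Y}$ the quadratic Brownian term is indeed absorbed, but the jump part of the driver does not become Lipschitz: writing $U^{k,\tau,i}=\widetilde Y_{t^-}(e^{-\gamma Z^{k,\tau,i}}-1)$ for the transformed jump integrands, the Hamiltonian term $H(Z,Q)\propto\sum e^{\frac{C}{\sigma}Z^{k,\tau,i}}$ becomes a term of the form $\widetilde Y\,(1+U/\widetilde Y)^{-C/(\gamma\sigma)}$, which is only locally Lipschitz and blows up as $1+U/\widetilde Y$ approaches $0$. So the transformed equation is a quadratic-exponential BSDE with jumps, not one of ``linear growth''. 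Tang--Li requires Lipschitz drivers and does not apply; Kazi-Tani--Possama\"i--Zhou treat this exponential structure but under a \emph{bounded} terminal condition (or via BMO estimates that again presuppose boundedness), whereas here $\xi$ is only assumed to satisfy \eqref{Condition wellposedness control problems}, i.e.\ exponential moments of orders $\gamma'>\gamma$ and $\eta'>\eta$, and membership in $\mathcal{Z}$ imposes only the integrability \eqref{Condition wellposedness control problems}--\eqref{Strong integrability MM}, not an a priori bound on $Z^{k,\tau,i}$. The ``main obstacle'' you flag at the end (a BMO-type bound on $Z$) is therefore not something the stated hypotheses give you, and without it neither existence nor your uniqueness claim follows from the cited results. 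This is precisely the difficulty the paper's dynamic-programming construction is designed to bypass: the candidate $Y$ is the agent's value process itself, so the supermartingale property under each $\mathbb{P}^{\delta}$ and the optimality of $V$ replace any general well-posedness theorem, and the required integrability of $Z$ is derived afterwards (Step 5) from bounds on $U^{\delta}$. Your observation that only the aggregate $\sum_{k,\tau}\Delta^{k,\tau}Z^{C^{k,\tau}}$ can be identified is correct and consistent with the paper's Step 6, but it does not repair the existence argument.
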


Furthermore, the terms defining \eqref{Process Yt} have natural interpretation. 
\begin{itemize}
 \item The compensation $Y_0$ is calibrated by the exchange to ensure the reservation utility constraint with level $R$ of the market maker.\footnote{From Theorem \ref{Theorem Market Maker}, we see that taking $Y_0=-\log(-R)$ ensures the reservation utility of the market maker. }
 \item The term $Z^{C^{k,\tau}}$ is the compensation given to the market maker with respect to the volatility risk induced by the option $C^{k,\tau}$.
 \item Each time a trade is executed on the ask (resp. bid) side for the option $C^{k,\tau}$, the market maker is compensated by the term $Z^{k,\tau,a}$ (resp. $Z^{k,\tau,b}$).
 \item The term $\frac{1}{2}\gamma\sigma^{2}\Big(\underset{k=1}{\overset{\mathcal{K}}{\sum}}\underset{\tau=1}{\overset{\mathcal{T}}{\sum}} \Delta(Z^{C^{k,\tau}}+Q^{k,\tau})\Big)^{2}-H(Z,Q)$ is a continuous coupon given to the market maker. 
\end{itemize}
When the market maker remuneration is $Y^{Y_0, Z}$, its optimal response can be computed explicitly as a functional of $Z$. 
\begin{theorem}
\label{Theorem Market Maker}
 For $\xi = Y^{Y_0, Z}$, the market maker utility is 
\begin{align*}
V_{\textup{MM}}(Y_{T}^{Y_{0},Z})=-\exp(-\gamma Y_{0}), 
\end{align*}
associated to the optimal bid-ask policy $\hat{\delta}_{t}^{k,\tau,i}(\xi):=\Delta^{i}(Z_{t}^{k,\tau,i})$, where  
\begin{align}\label{Optimal Response Agent}
\Delta^{i}(Z_{t}^{k,\tau,i}) := (-\delta_{\infty})\vee \bigg(-Z_{t}^{k,\tau,i}+\frac{1}{\gamma}\textup{log}\Big(1+\frac{\sigma \gamma}{C}\Big)\bigg)\wedge \delta_{\infty} \text{ for } (k,\tau,i)\in\mathcal{K}\times\mathcal{T}\times\{a,b\}.
\end{align}
\end{theorem}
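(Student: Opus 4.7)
The plan is to run the standard verification argument for exponential-utility principal--agent problems. Set $U_t := -\exp\bigl(-\gamma(Y_t^{Y_0,Z} + PL_t^\delta)\bigr)$, so $V_{\text{MM}}(Y_T^{Y_0,Z}) = \sup_{\delta\in\mathcal{A}} \mathbb{E}^\delta[U_T]$ and $U_0 = -\exp(-\gamma Y_0)$. I aim to show that $U$ is a $\mathbb{P}^\delta$-supermartingale for every admissible $\delta$, and a true $\mathbb{P}^{\hat\delta}$-martingale at the candidate control $\hat\delta$ of \eqref{Optimal Response Agent}. The identity $V_{\text{MM}} = -\exp(-\gamma Y_0)$ will then follow directly.

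First I apply It\^o's formula to $U$. By \eqref{Definition dynamics option} and $\int\mathcal{Q}\,dS = \int\sigma\mathcal{Q}\,dW$, the continuous martingale part of $Y^{Y_0,Z} + PL^\delta$ has Brownian coefficient $\sigma\sum_{(k,\tau)}\Delta^{k,\tau}(Z^{C^{k,\tau}}_t + Q^{k,\tau}_t)$, while its continuous finite-variation part has drift $\tfrac12\gamma\sigma^2\bigl(\sum\Delta^{k,\tau}(Z^{C^{k,\tau}}+Q^{k,\tau})\bigr)^2 - H(Z,Q)$ by construction of $Y^{Y_0,Z}$. The It\^o correction $\tfrac12\gamma^2 U_t(\cdot)^2$ coming from squaring the Brownian coefficient exactly cancels the quadratic term, leaving a net continuous drift of $\gamma U_t H(Z_t,Q_t)$. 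On the jump side, an event of $N^{k,\tau,i}$ moves $Y+PL^\delta$ by $Z_t^{k,\tau,i} + \delta_t^{k,\tau,i}$, so the predictable compensator of the pure-jump part of $U$ equals
\begin{equation*}
U_{t^-}\sum_{k,\tau,i}\bigl(e^{-\gamma(Z^{k,\tau,i}_t+\delta^{k,\tau,i}_t)} - 1\bigr)\lambda^{k,\tau,i}_t\,dt \;=\; -\gamma U_{t^-}\, h(\delta_t,Z_t,Q_t)\,dt.
\end{equation*}
Collecting everything yields
\begin{equation*}
dU_t \;=\; \gamma U_t\bigl(H(Z_t,Q_t) - h(\delta_t,Z_t,Q_t)\bigr)\,dt + dM^\delta_t,
\end{equation*}
with $M^\delta$ a local $\mathbb{P}^\delta$-martingale. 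Since $U_t\leq 0$ and $H\geq h$ by the very definition of $H$, the drift is nonpositive, giving $\mathbb{E}^\delta[U_T]\leq U_0 = -e^{-\gamma Y_0}$ once true-martingality of $M^\delta$ is in hand.

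Next I exhibit a maximizer that makes the inequality tight. The Hamiltonian decouples: $h$ is a sum over $(k,\tau,i)$ of independent terms, so the supremum is attained coordinatewise. For fixed $(k,\tau,i)$ and $z:=z^{k,\tau,i}$, differentiating $\delta\mapsto \gamma^{-1}(1 - e^{-\gamma(z+\delta)})A e^{-C(\delta+f^{k,\tau})/\sigma}$ and setting the derivative to zero gives the unique unconstrained stationary point $-z + \gamma^{-1}\log(1 + \sigma\gamma/C)$ (uniqueness via a straightforward second-order check after the change of variable $u = e^{-\gamma\delta}$). Truncating at $\pm\delta_\infty$---which is assumed large enough that the inventory-cap indicator $\mathbbm{1}_{\{\phi(i)\mathcal{Q}_{t^-}>-\overline{q}\}}$ does not interact with the truncation---yields exactly $\hat\delta^{k,\tau,i}_t = \Delta^i(Z^{k,\tau,i}_t)$ from \eqref{Optimal Response Agent}. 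Plugging $\hat\delta$ into $dU$ kills the drift, so $U$ is a $\mathbb{P}^{\hat\delta}$-martingale and $\mathbb{E}^{\hat\delta}[U_T] = -e^{-\gamma Y_0}$, which combined with the preceding bound gives $V_{\text{MM}}(Y_T^{Y_0,Z}) = -e^{-\gamma Y_0}$.

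The step I expect to be the main obstacle is the passage from local to true martingale for $M^\delta$ under $\mathbb{P}^\delta$: without uniform integrability along a localizing sequence, one cannot take expectations and the supermartingale inequality above is not yet legitimate. This is where the admissibility data enter in a crucial way---the uniform bound $|\delta|\leq \delta_\infty$, the fact that $\Delta^{k,\tau}\in(0,1)$ is bounded, the inventory cap $\overline{q}$ which keeps $\mathcal{Q}$ and each $Q^{k,\tau}$ bounded, and the integrability class $\mathcal{Z}$ imposed on $Z$ in Appendix \ref{subsection wellposedness}. Together these provide enough exponential moments for both the Brownian stochastic integral and the compensated Poisson integrals comprising $M^\delta$ to be genuine martingales on $[0,T]$, closing the argument.
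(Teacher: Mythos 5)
Your argument is essentially the paper's own verification proof: you apply It\^o's formula to the exponential of $Y^{Y_0,Z}+PL^\delta$ (the paper works with the positive submartingale $e^{-\gamma\overline{Y}}$, you with its negative, a pure sign convention), identify the drift $\gamma U(H-h)$, invoke the integrability condition \eqref{Strong integrability MM} and bounded intensities to pass from local to true (super)martingale, and obtain equality exactly when $\delta$ maximizes the Hamiltonian, whose coordinatewise first-order condition gives \eqref{Optimal Response Agent}. This is correct and matches the paper's route; the only slight inaccuracy is the remark that the cap $\overline{q}$ bounds each individual $Q^{k,\tau}$ (it only bounds the aggregate $\mathcal{Q}$), but that is not needed since the integrability is carried by $\mathcal{Z}$ and the bounded $\mathcal{Q}$.
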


Theorem \ref{Theorem Market Maker} provides the optimal response of the market maker to any contract of the form \eqref{Process Yt}, see Appendix \ref{Proof theorem market maker} for the proof. Moreover from Equation \eqref{Optimal Response Agent}, we get that the exchange can anticipate the optimal behavior of the market maker. It is therefore easy for the platform to compute its own utility for a given contract.

\subsection{Solving the exchange's problem} \label{Section Principal's problem}

In this section we formalize the goal of the exchange and solve the problem of designing the optimal contract.

\subsubsection{Description of the exchange's problem}

We recall that the exchange has two objectives. The first one is to receive a high number of trades to collect the associated fees. The second is to have small spreads on its platform, in particular for far from the money options for which spreads are typically large. This is because the clients want to have sufficient liquidity on the whole list of options.\\

In order to quantify the first objective, we introduce a weighted version of the total number of trades:
$$
\mathcal{N}_t = \sum_{i = a, b}\sum_{(k, \tau)\in \mathcal{K}\times \mathcal{T}}c^{k, \tau}N^{k, \tau, i}_t,
$$
where for any $(k, \tau)\in\mathcal{K}\times\mathcal{T}$, $c^{k, \tau}\geq 0$ represents the value attributed to a trade on the option $C^{k, \tau}$ by the exchange.\footnote{One can for example take $c^{k,\tau}=f^{k,\tau}$. In this case, $\mathcal{N}_T$ represents the total amount of fees collected by the exchange.} Hence the more the exchange wants to attract liquidity on the option $C^{k, \tau}$, the higher $c^{k, \tau}$ has to be. If the considered option is very liquid (at the money options for example), the exchange may choose a rather small $c^{k,\tau}$.\\

To take into account the second objective, we consider the following quantity
\begin{align}
\label{Penalty Principal}
\mathcal{L}^{\delta}_{T}:=\underset{i=a,b}{\sum}\sum_{(k,\tau)\in \mathcal{K}\times\mathcal{T}}\int_{0}^{T}\omega\big(\delta^{k, \tau, i}_t-\delta^{k,\tau}_{\infty}\big)\mathrm{d}N_{t}^{k,\tau,i},
\end{align}
where $\omega\in [0, 1)$,\footnote{The choice of $\omega \in [0,1)$ is for technical reasons only.} and $\delta^{k,\tau}_{\infty}$ can be seen as a spread threshold the exchange would like to impose to the market maker. The more important the second objective for the exchange, the closer to one $\omega$ has to be chosen.\\

We thus consider that the exchange is looking for the contract $\xi$ that maximizes the following quantity:
\begin{align}\label{Objective function principal Xi}
\mathbb{E}^{\hat{\delta}(\xi)}\Bigg[-\text{exp}\bigg(-\eta\Big(\mathcal{N}_{T}-\mathcal{L}^{\delta(\xi)}_{T}-\xi\Big)\bigg)\Bigg],    
\end{align}
where $\eta>0$ is the risk aversion of the exchange and $\hat{\delta}(\xi)$ denotes the optimal response of the market maker given the contract $\xi$. \\

According to Lemma \ref{Lemma Contract Representation}, we know that it is enough for the exchange to consider contracts of the form $Y_T^{Y_0, Z}$ with $(Y_0,Z)\in \mathbb{R}\times\mathcal{Z}$. So, \eqref{Objective function principal Xi} becomes
\begin{equation}
\label{exchange objective}
\mathbb{E}^{\hat{\delta}(Y^{Y_0,Z})}\Bigg[-\text{exp}\bigg(-\eta\Big(\mathcal{N}_{T}-\mathcal{L}^{\delta(Y^{Y_0,Z})}_{T}-Y_T^{Y_0,Z}\Big)\bigg)\Bigg].
\end{equation}
Moreover for a contract of the form $Y^{Y_0, Z}$, from Theorem \ref{Theorem Market Maker}, the exchange knows the best response $\hat{\delta}(Y^{Y_0, Z})$ of the market maker. Indeed we recall that the optimal controls are given by
$$
\hat{\delta}^{k, \tau, i}(Y^{Y_0,Z}) = \Delta^{i}(Z_{t}^{k,\tau,i}).
$$
It implies that
$$
\mathcal{L}^{\hat{\delta}(Y^{Y_0,Z})}_{T}= \mathcal{L}^{Z}_{T}:= \underset{i=a,b}{\sum}\sum_{(k,\tau)\in \mathcal{K}\times\mathcal{T}}\int_{0}^{T}\omega\big(\Delta^{i}(Z_{t}^{k,\tau,i})-\delta^{k,\tau}_{\infty}\big)\mathrm{d}N_{t}^{k,\tau,i}.
$$
As in \cite{el2018optimal}, we notice that for a given contract $Y^{Y_0, Z}$, the market maker's optimal response does not depend on $Y_{0}$. The exchange objective function \eqref{exchange objective} being decreasing in $Y_{0}$, the maximization with respect to $Y_0$ is achieved at the level $\hat{Y}_{0} = -\log(-R)$.\footnote{Note that $-\exp(-\hat{Y}_0)=R$.} Finally, the exchange problem becomes
\begin{align}
\label{Reduced exchange problem}
&V_{0}^{E}:= \underset{Z \in \mathcal{Z}}{\text{sup}}~\mathbb{E}^{\Delta(Z)}\Bigg[-\text{exp}\bigg(-\eta\Big(\mathcal{N}_{T}^{Z}-\mathcal{L}_{T}^Z-Y_{T}^{\hat{Y}_{0},Z}\Big)\bigg)\Bigg].
\end{align}

\subsubsection{Stochastic control approach for the reduced exchange problem}

In this section we solve the reduced exchange problem \eqref{Reduced exchange problem}. We characterize the optimal contract components $Z^\star$ and explain how to compute them in practice. \\

To solve this stochastic control problem, we study the associated Hamilton-Jacobi-Bellman (HJB for short) equation. This approach characterizes an optimal $Z^{\star}$ solving \eqref{Reduced exchange problem} under the form of a feedback function. The following result is proved in Appendix \ref{proof:principal_verification}.

\begin{theorem}
\label{Main theorem}
The maximization problem \eqref{Reduced exchange problem} admits a solution $Z^{\star}$ given by
\begin{equation}
\label{exchange optimal control}
 Z^{\star k,\tau,i}(t,\mathcal{Q}_{t^{-}}):= \frac{1}{a-b}\log\bigg( \frac{bx_2U\big(t,\mathcal{Q}_{t^{-}}\big)}{ax_1^{k, \tau}U\big(t,\mathcal{Q}_{t^{-}}-\Delta^{k,\tau}\phi(i) \big)}\bigg) \text{ and } Z^{\star C^{k,\tau}}(t,Q_{t}^{k, \tau}):=-\frac{\gamma}{\gamma+\eta}Q_{t}^{k,\tau},
\end{equation}
for $(k,\tau,i)\in\mathcal{K}\times\mathcal{T}\times\{a,b\}$, where $a$, $b$, $(x_1^{k, \tau})_{k\in \mathcal{K}, \tau \in \mathcal{T}}$ and $x_2$ are constants defined in Appendix \ref{proof:principal_verification} and where $\tilde{U}:=(-U)^{-\frac{C}{\sigma\eta(1-\omega)}}$ is the unique solution of the following linear PDE on $[0, T]\times \mathbb{R}$:
\begin{align}\label{HJB PDE after linearization}
\left\{
    \begin{array}{ll}
        0=\partial_{t}\tilde{U}(t,\mathcal{Q}) - \tilde{U}(t,\mathcal{Q})   \frac{C\gamma \eta}{\gamma+\eta}\frac{ \sigma }{2(1-\omega)}   \mathcal{Q}^{2}+\underset{i=a,b}{\sum}\underset{(k, \tau)\in \mathcal{K}\times \mathcal{T}}{\sum}\hat{C}^{k,\tau}\tilde{U}\big(t,\mathcal{Q}-\Delta^{k,\tau}\phi(i)\big)\mathbf{1}_{\phi(i)\mathcal{Q}>-\overline{q}}, \\
        \tilde{U}(T,\mathcal{Q})=1,
    \end{array}
\right.
\end{align}
where $\hat{C}^{k,\tau}$ are defined in Apppendix \ref{proof:principal_verification}.
\end{theorem}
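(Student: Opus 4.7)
The plan is to attack \eqref{Reduced exchange problem} by dynamic programming in the reduced state variable $\mathcal{Q}$ and characterise the optimiser as the maximiser of the associated HJB Hamiltonian. The crucial preliminary observation is that, by Theorem \ref{Theorem Market Maker}, under $\mathbb{P}^{\Delta(Z)}$ every jump intensity only depends on $Z^{k,\tau,i}$ (and on $\mathcal{Q}_{t^-}$ through the inventory constraint), while the drift term in \eqref{Process Yt} couples the diffusive controls $Z^{C^{k,\tau}}$ only through the aggregate $\sum_{k,\tau}\Delta^{k,\tau}(Z^{C^{k,\tau}}+Q^{k,\tau})$. Hence we expect the exchange value function to depend only on $(t,\mathcal{Q})$, and we look for an ansatz $V_t=U(t,\mathcal{Q}_t)$ with $U(T,\mathcal{Q})=-1$.

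Writing $\Phi(y):=-e^{-\eta y}$ and applying It\^o's formula to $\Phi\!\big(\mathcal{N}_t^Z-\mathcal{L}_t^Z-Y_t^{\hat Y_0,Z}\big)\,U(t,\mathcal{Q}_t)$ under $\mathbb{P}^{\Delta(Z)}$, the dynamic programming principle yields the HJB equation
\begin{align*}
0=\partial_t U(t,\mathcal{Q})+\sup_{z}\,\mathcal{H}(t,\mathcal{Q},z,U),\qquad U(T,\mathcal{Q})=-1,
\end{align*}
where the Hamiltonian splits as $\mathcal{H}=\mathcal{H}^{\mathrm{diff}}+\mathcal{H}^{\mathrm{jump}}$. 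The diffusive part collects the $dW$-contribution from $-\eta Y_T^{\hat Y_0,Z}$: pointwise maximisation in $(Z^{C^{k,\tau}})_{k,\tau}$ reduces to minimising $\tfrac12(\gamma+\eta)\sigma^2\big(\sum\Delta^{k,\tau}(Z^{C^{k,\tau}}+Q^{k,\tau})\big)^2-\tfrac12\gamma\sigma^2(\cdot)^2$-style terms; completing the square gives the announced linear feedback $Z^{\star C^{k,\tau}}=-\tfrac{\gamma}{\gamma+\eta}Q^{k,\tau}$ and leaves a residual quadratic $-\tfrac{C\gamma\eta}{2(\gamma+\eta)(1-\omega)}\sigma\,\mathcal{Q}^2$ after the forthcoming rescaling. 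The jump part decomposes into a sum, for each $(k,\tau,i)$, of terms of the form
\begin{align*}
\sup_{z^{k,\tau,i}}\Big\{\big[U(t,\mathcal{Q}-\Delta^{k,\tau}\phi(i))\,e^{-\eta(c^{k,\tau}-\omega(\Delta^i(z^{k,\tau,i})-\delta_\infty^{k,\tau})-z^{k,\tau,i})}-U(t,\mathcal{Q})\big]\lambda^{k,\tau}(\Delta^i(z^{k,\tau,i}))\Big\}\mathbf{1}_{\phi(i)\mathcal{Q}>-\bar q},
\end{align*}
which is a one-dimensional concave problem in $z^{k,\tau,i}$; setting the derivative to zero yields (after straightforward algebra using the explicit form of $\Delta^i$ and $\lambda^{k,\tau}$) an optimiser of the log-ratio form $Z^{\star k,\tau,i}=\tfrac{1}{a-b}\log\!\big(bx_2 U(t,\mathcal{Q})/(ax_1^{k,\tau}U(t,\mathcal{Q}-\Delta^{k,\tau}\phi(i)))\big)$ announced in \eqref{exchange optimal control}, with the constants $a,b,x_1^{k,\tau},x_2$ absorbing $A,C,\sigma,\eta,\gamma,\omega,c^{k,\tau},\delta_\infty^{k,\tau}$.

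Substituting these optimisers back into the HJB produces a PDE of the form $\partial_t U+\alpha\,\mathcal{Q}^2 U+\sum_{k,\tau,i}\beta^{k,\tau}U(t,\mathcal{Q}-\Delta^{k,\tau}\phi(i))^{p}U(t,\mathcal{Q})^{1-p}\mathbf{1}_{\phi(i)\mathcal{Q}>-\bar q}=0$ for an explicit exponent $p=C/(\sigma\eta(1-\omega))+1$ or similar. The exponential-type change of variables $\tilde U:=(-U)^{-C/(\sigma\eta(1-\omega))}$ is designed exactly so that the power-law nonlinearity collapses and all coefficients becomes linear in $\tilde U$; a direct computation then produces the linear integro-differential equation \eqref{HJB PDE after linearization} with terminal condition $\tilde U(T,\cdot)=1$. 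Because $\mathcal{Q}$ takes values in the discrete set $\{-\bar q\Delta,\dots,\bar q\Delta\}$ (the $\Delta$-weighted lattice reachable under the inventory constraint), \eqref{HJB PDE after linearization} is a finite linear ODE system with bounded coefficients, hence admits a unique classical, strictly positive solution on $[0,T]$.

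The last step is verification. Given the explicit $Z^\star$, one plugs it into $V_0^E$ and uses It\^o's formula plus the HJB equation to show that $\Phi\!\big(\mathcal{N}_t^{Z^\star}-\mathcal{L}_t^{Z^\star}-Y_t^{\hat Y_0,Z^\star}\big)U(t,\mathcal{Q}_t)$ is a true $\mathbb{P}^{\Delta(Z^\star)}$-martingale (not just a local one), while for any other admissible $Z\in\mathcal{Z}$ it is a supermartingale; taking expectations between $0$ and $T$ yields $V_0^E=U(0,0)$ attained at $Z^\star$. The main obstacle is precisely this verification step: one must (a) check that $Z^\star$ lies in $\mathcal{Z}$, i.e.\ that the induced $\hat\delta=\Delta(Z^\star)$ satisfies $|\hat\delta|\le\delta_\infty$ (this is where the assumption that $\delta_\infty$ is large enough enters, uniformly in $(t,\mathcal{Q})$ since $\tilde U$ is bounded away from $0$ and $\infty$ on the finite lattice), and (b) establish enough exponential integrability of $\mathcal{N}^Z,\mathcal{L}^Z,Y^{\hat Y_0,Z}$ to upgrade local to true martingality; both points are carried out in Appendix \ref{proof:principal_verification} and rely on the boundedness of $Z^\star$ together with the boundedness of the controls in $\mathcal{A}$.
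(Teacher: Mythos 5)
Your proposal is correct and follows essentially the same route as the paper: write the HJB equation, optimize the diffusive and jump parts of the Hamiltonian pointwise to obtain the stated feedback controls, apply the power change of variable $\tilde U=(-U)^{-C/(\sigma\eta(1-\omega))}$ to linearize, then conclude by a supermartingale/martingale verification argument together with the admissibility check on $\delta_\infty$. One minor imprecision: the reachable values of $\mathcal{Q}$ form a bounded but not necessarily finite set since the $\Delta^{k,\tau}$ need not be commensurate; boundedness, however, is what the Cauchy--Lipschitz existence/uniqueness argument and the uniform negativity of $U$ actually require, so this does not affect the argument.
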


Theorem \ref{Main theorem} provides the incentives $Z^{\star}$ that maximize the exchange expected utility function, see Appendix \ref{proof:principal_verification} for the proof. The optimal contract is therefore given by
\begin{align}
\label{eq:optimal_contract}
\xi^{\star}=Y^{\hat{Y}_0, Z^*}&=\hat{Y}_{0}+\int_0^T \underset{(k, \tau) \in \mathcal{K}\times \mathcal{T}}{\sum} \bigg(\underset{i=a,b}{\sum} Z_{r}^{\star k,\tau,i}\mathrm{d}N_{r}^{k,\tau,i}+Z_{r}^{\star C^{k,\tau}}\mathrm{d}C_{r}^{k,\tau}\bigg) \\
\nonumber &+\int_0^T \Big(\frac{1}{2}\gamma\sigma^{2}\Big(\underset{(k, \tau) \in \mathcal{K}\times \mathcal{T}}{\sum} \Delta^{k,\tau}(Z_{r}^{\star C^{k,\tau}}+Q_{r}^{k,\tau})\Big)^{2}-H(Z^\star_{r},Q_{r})\Big)\mathrm{d}r.
\end{align}
We now provide some comments on the interpretation of the optimal incentives.
\begin{itemize}
\item The term $\int_{0}^{T}Z^{\star C^{k,\tau}}_{u}\mathrm{d}C_{u}^{k,\tau}$ in the optimal contract corresponds to part of the inventory risk process of the market maker $(Q_{t}^{k,\tau}C_{t}^{k,\tau})_{t\in [0,T]}$ that is supported by the exchange. As in \cite{el2018optimal}, the proportion of risk handled by the platform on each option is $\frac{\gamma}{\gamma+\eta}$. Hence, the more risk averse the exchange, the smallest this proportion.
\item An application of Ito's formula gives the following approximation:
\begin{align}
\log\Big(\frac{\tilde{U}(t,\mathcal{Q})}{\tilde{U}\big(t,\mathcal{Q}-\Delta^{k,\tau}\phi(i)\big)}\Big)\approx\phi(i) 2\frac{\sigma}{C}(T-t)\tilde{C}\Delta^{k,\tau}\mathcal{Q},
\end{align}
where $\tilde{C}:=\frac{C\gamma \eta}{\gamma+\eta}\frac{ \sigma }{2(1-\omega)}$. Thus, when  the  aggregated inventory  is highly positive, the exchange provides incentives to the market maker so that it attracts buy  market  orders  and  tries  to  dissuade  him  to  accept  more  sell  market  orders,  and conversely for a negative inventory.
\item Numerically, we show that the incentive $Z^{\star k,\tau,a}$ and $Z^{\star k,\tau,b}$ given by \eqref{exchange optimal control} are increasing functions of the value $c^{k,\tau}$ that the principal associates to the option $C^{k,\tau}$. Hence, he logically provides higher incentives to an option he is more interested in.
\item Although the principal manages a large number of listed options, we circumvent the curse of dimensionality by working with the aggregated inventory process. Note that the pay-off of the optimal contract depends only on $t$ and $\mathcal{Q}$. Thus it is very easy to compute for the exchange at the end of the trading day.
\end{itemize}

In practice to implement the above methodology, one needs to compute the function $\tilde{U}$ in order to design the optimal contract. A first way to do this is to use a classical finite difference scheme on the PDE \eqref{HJB PDE after linearization}. In Section \ref{Section numerical results PAgent} we use this technique for some numerical experiments on our method.\\

Moreover, as PDE \eqref{HJB PDE after linearization} is linear, we can also resort to a probabilistic representation to compute $\tilde{U}$ using a Monte-Carlo method. More precisely we have the following result which is a direct consequence of the Feynman-Kac formula.
\begin{lemma}
\label{lemma:representation}
We have the following representation:
\begin{align}
\tilde{U}(t,q):=\mathbb{E}\bigg[\textup{exp}\Big(\int_{t}^{T}-\tilde{C}\big(\mathcal{Q}_{s}^{t,q}\big)^{2}+\sum_{i=a,b}\sum_{(k, \tau)\in \mathcal{K}\times \mathcal{T}} \overline{\lambda}_{s}^{k,\tau,i}\mathrm{d}s\Big) \bigg],
\end{align}
where 
$$
\mathcal{Q}_{s}^{t,q}=q+\int_{t}^{s}\underset{(k, \tau)\in \mathcal{K}\times \mathcal{T}}{\sum} \Delta^{k, \tau } \mathrm{d}\big(\overline{N}_{u}^{k,\tau,b}-\overline{N}_{u}^{k,\tau,a}\big),
$$
where for any $(k, \tau)\in \mathcal{K}\times \mathcal{T}$ and $i=a$ or $b$, $\overline{N}^{k,\tau,i}$ is a point process with intensity $\overline{\lambda}_{s}^{k,\tau,i}:=\hat{C}^{k,\tau}\mathbf{1}_{\{ \phi(i)\mathcal{Q}^{t,q}_{s^{-}}>-\overline{q}\}}$, with $\hat{C}^{k,\tau}$ defined in Appendix \ref{proof:principal_verification}.
\end{lemma}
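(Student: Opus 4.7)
The plan is to recognize \eqref{HJB PDE after linearization} as the Kolmogorov backward equation of a pure-jump Markov process with potential, and to apply the classical Feynman–Kac formula. First I identify the infinitesimal generator of the process $(\mathcal{Q}^{t,q}_s)_{s\geq t}$ introduced in the statement. Because each $\overline{N}^{k,\tau,i}$ is a point process with predictable intensity $\overline{\lambda}^{k,\tau,i}_s=\hat{C}^{k,\tau}\mathbf{1}_{\{\phi(i)\mathcal{Q}^{t,q}_{s^{-}}>-\overline{q}\}}$ and because a jump of $\overline{N}^{k,\tau,b}$ (resp.\ $\overline{N}^{k,\tau,a}$) moves $\mathcal{Q}^{t,q}$ by $+\Delta^{k,\tau}=-\Delta^{k,\tau}\phi(b)$ (resp.\ $-\Delta^{k,\tau}=-\Delta^{k,\tau}\phi(a)$), the process is Markov with generator
\begin{align*}
(\mathcal{L}f)(q)=\sum_{i=a,b}\sum_{(k,\tau)\in\mathcal{K}\times\mathcal{T}}\hat{C}^{k,\tau}\mathbf{1}_{\{\phi(i)q>-\overline{q}\}}\bigl[f(q-\Delta^{k,\tau}\phi(i))-f(q)\bigr].
\end{align*}

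Next I recast \eqref{HJB PDE after linearization} by adding and subtracting $\tilde{U}(t,\mathcal{Q})\sum_{i,k,\tau}\hat{C}^{k,\tau}\mathbf{1}_{\{\phi(i)\mathcal{Q}>-\overline{q}\}}$, which yields the equivalent backward equation
\begin{align*}
\partial_{t}\tilde{U}(t,\mathcal{Q})+(\mathcal{L}\tilde{U})(t,\mathcal{Q})+r(t,\mathcal{Q})\tilde{U}(t,\mathcal{Q})=0,\qquad \tilde{U}(T,\cdot)=1,
\end{align*}
with potential
\begin{align*}
r(t,\mathcal{Q}):=-\tilde{C}\mathcal{Q}^{2}+\sum_{i=a,b}\sum_{(k,\tau)\in\mathcal{K}\times\mathcal{T}}\hat{C}^{k,\tau}\mathbf{1}_{\{\phi(i)\mathcal{Q}>-\overline{q}\}}.
\end{align*}
This is precisely the form in which the Feynman–Kac formula applies, with the candidate expectation in the lemma equal to $\mathbb{E}\bigl[\exp(\int_t^T r(s,\mathcal{Q}^{t,q}_s)\,ds)\,\tilde{U}(T,\mathcal{Q}^{t,q}_T)\bigr]$.

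To make the identification rigorous, I apply Itô's formula for semimartingales to
\begin{align*}
M_{s}:=\tilde{U}(s,\mathcal{Q}^{t,q}_{s})\exp\Bigl(\int_{t}^{s}r(u,\mathcal{Q}^{t,q}_{u})\,du\Bigr),\qquad s\in[t,T].
\end{align*}
The absolutely continuous part of $dM_s$ contains $\partial_t\tilde{U}$, the predictable compensator of the jump part (equal to $\mathcal{L}\tilde{U}$), and the derivative of the exponential factor (contributing $r\tilde{U}$); by the PDE these three terms cancel, so $M$ is a local martingale on $[t,T]$ with $M_{t}=\tilde{U}(t,q)$ and $M_{T}=\exp(\int_{t}^{T}r(u,\mathcal{Q}^{t,q}_{u})\,du)$. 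The constraint $\mathbf{1}_{\{\phi(i)\mathcal{Q}>-\overline{q}\}}$ confines $\mathcal{Q}^{t,q}$ to a bounded set (each jump has size at most $\max_{k,\tau}\Delta^{k,\tau}$), so both $r$ and the jump intensities are bounded, and a standard argument shows $\tilde{U}$ is bounded on this compact reachable state space; hence $M$ is uniformly bounded and is a true martingale. Taking expectations and using $\tilde{U}(T,\cdot)=1$ gives the claimed formula.

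The main obstacle is the passage from local to true martingale, which requires a priori boundedness of $\tilde{U}$ and sufficient regularity in time. Both follow from the boundedness of the coefficients of the PIDE and the compactness of the effective state space induced by $\overline{q}$; alternatively one can define $\tilde{U}$ directly by the right-hand side of the representation, verify by differentiation under the expectation that it satisfies \eqref{HJB PDE after linearization}, and invoke uniqueness for the linear PDE with bounded coefficients to close the argument.
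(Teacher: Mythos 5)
Your proof is correct and follows the same route as the paper, which simply invokes the Feynman--Kac formula (referring to the analogous result in \cite{el2018optimal}) for the linear PDE \eqref{HJB PDE after linearization}. Your write-up merely supplies the standard details — generator identification, the add-and-subtract rewriting with potential $r$, the product/Itô computation making $M$ a local martingale, and the boundedness argument (bounded reachable state space and bounded $\tilde{U}$) upgrading it to a true martingale — all of which are sound.
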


The proof is in the same vein as \cite[Proposition 4.1]{el2018optimal}. We now turn to numerical illustrations of our make take fees policy.

\subsection{Numerical results}\label{Section numerical results PAgent}

For numerical experiments, we consider three options which are characterized by their delta. We fix the following parameters: $A=1.5s^{-1}, \sigma=C=0.3s^{-1/2},f^{k, \tau}=[0.5,0.8,0.8]$ the vector of fees, and $\delta^{k, \tau}_\infty = [2,3,3]$ the set of quotation thresholds. The first option is at the money, the second one is in the money and the third is out of the money, hence the following set of deltas $[0.5,0.8,0.2]$. Moreover, we take $\eta=1, \gamma=0.01,T=100s,\overline{q}=40$.\\

We analyze the impact of the penalty $\omega$ and the weight associated to each options $c^{k,\tau}$ in the value function of the exchange.\\

In Figure \ref{figure spread inventory 0}, we display the average bid-ask spread at initial time on each option for $\omega=0$, and $c^{k,\tau}$ being equal either to $0$ or $0.1$. We see that a higher $c^{k,\tau}$ leads to a decrease of the spread for the option $C^{k,\tau}$. This result is in line with the form of the incentives in Theorem \ref{Main theorem}. Indeed $Z^{\star k,\tau,i}$ is an increasing function of $c^{k,\tau}$ and $\hat{\delta}^{k,\tau, i}$ is a decreasing function of $Z^{\star k,\tau, i}$. Thus, increasing the interest of the principal for the option $C^{k,\tau}$ leads to a decrease of the spread proposed by the market maker on this option. This shows that the exchange has a direct control on each option he is interested in.

\begin{figure}[h!]
\begin{center}
    \includegraphics[width=0.70\textwidth]{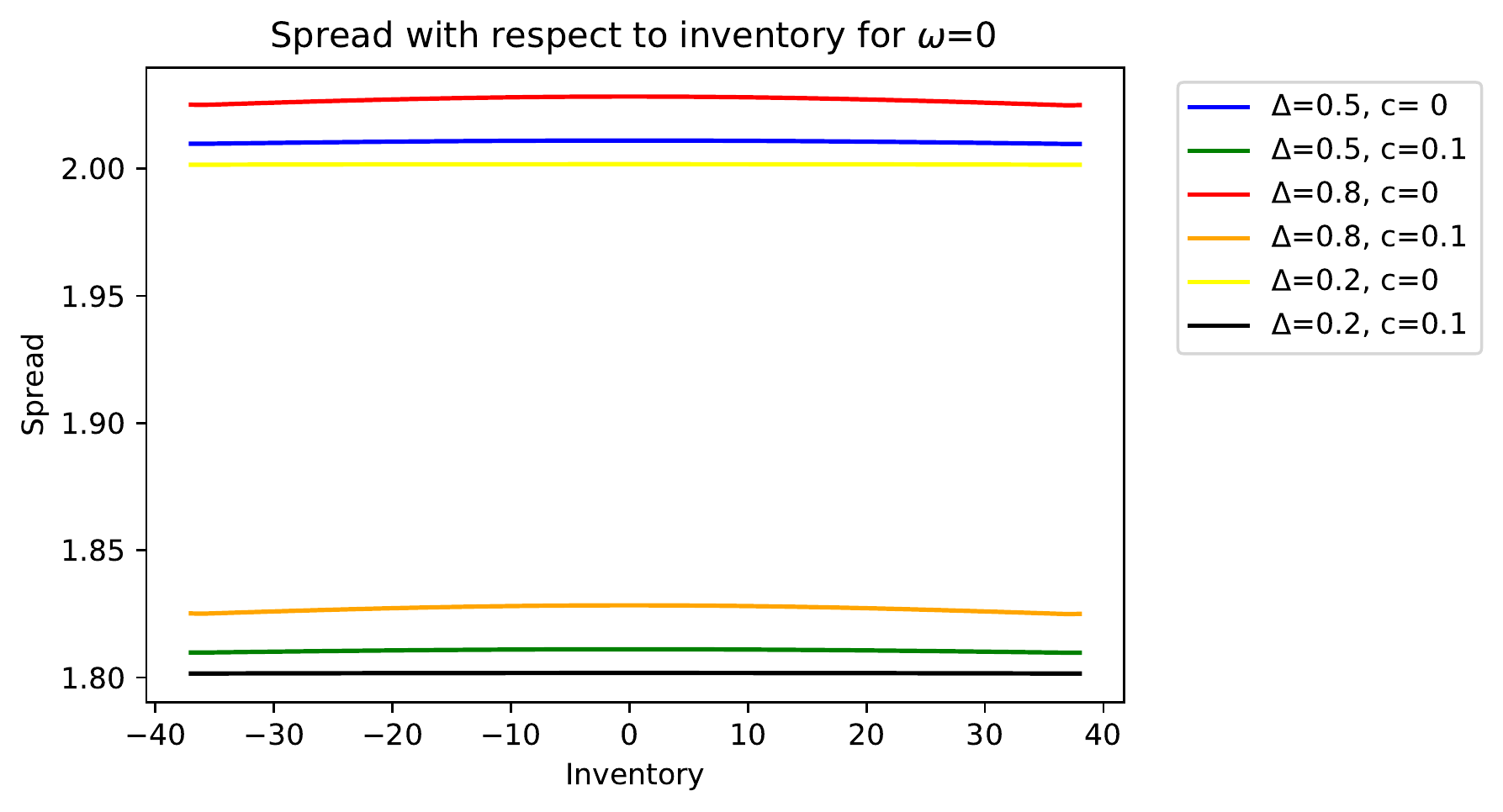}
\end{center}
\caption{Evolution of the spread at initial time with respect to inventory, $\omega=0$.}
\label{figure spread inventory 0}
\end{figure}

In Figure \ref{figure spread inventory 0.1}, we focus on the role of $\omega$, equal to $0.1$ on the spreads proposed by the market maker. As expected, a non-vanishing value of $\omega$ leads to a decrease of the spread for all the quoted options. This agrees with Theorem \ref{Main theorem}, where we see that the incentives are an increasing function of $\omega \in [0,1)$. Thus, the exchange can influence the whole set of spreads proposed on the quoted options. \\

\begin{figure}[h!]
\begin{center}
    \includegraphics[width=0.70\textwidth]{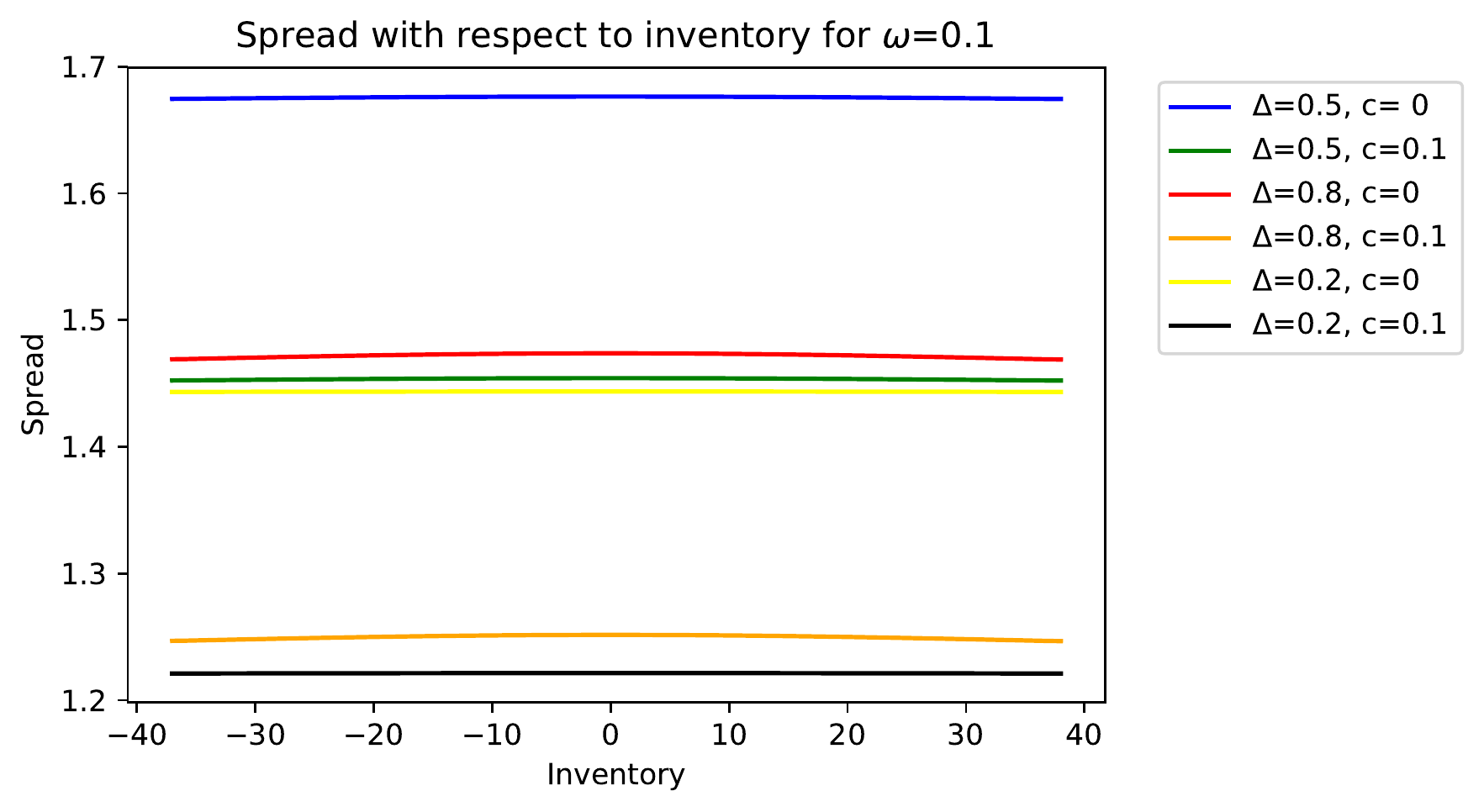}
\end{center}
\caption{Evolution of the spread at initial time with respect to inventory, $\omega=0.1$.}
\label{figure spread inventory 0.1}
\end{figure}

We conclude by showing in Figure \ref{figure spread inventory 0.2} the behavior of the average spread with a higher $\omega$, equal to $0.2$. We obtain similar effects as in Figure \ref{figure spread inventory 0.1}, namely a decrease of the spread on all quoted options for a higher $\omega$. 

\begin{figure}[h!]
\begin{center}
    \includegraphics[width=0.70\textwidth]{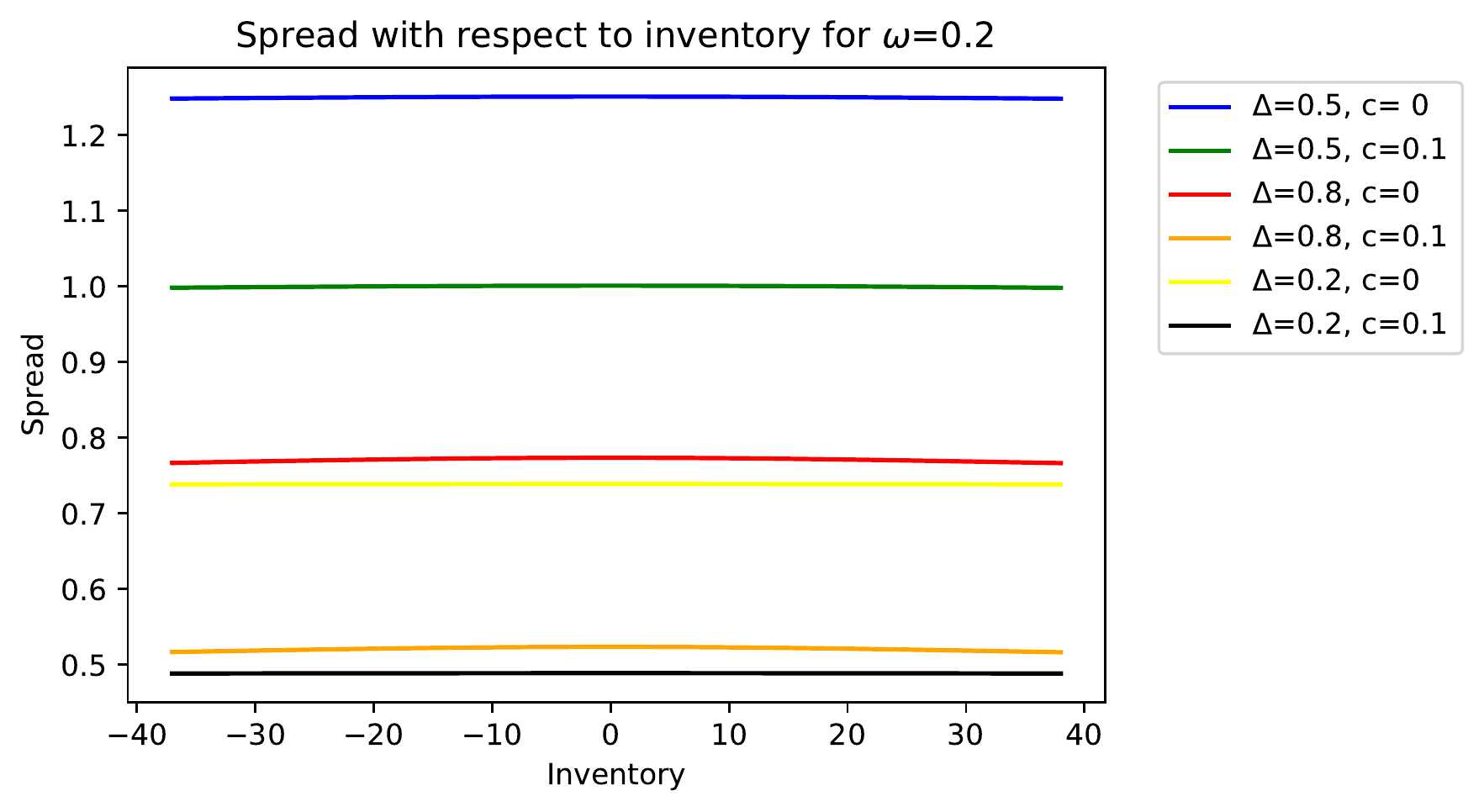}
\end{center}
\caption{Evolution of the spread at initial time with respect to inventory, $\omega=0.2$.}
\label{figure spread inventory 0.2}
\end{figure}

\subsection{Conclusion}

This work is, to our knowledge, the first to address the problem of designing a derivatives exchange, based solely on market data. In the first part, a simple market driven methodology enables us to choose which options the exchange should select to attract market takers. In the second part, we provide a make take fees policy between the exchange and the market maker which ensures a high quality of liquidity for the listed options.

\begingroup
\setcounter{section}{0}
\renewcommand\thesection{\Alph{section}}
\section{Appendix}
\subsection{Proof of the convergence of the Lloyd's algorithm}
\label{subsec:proof:proposition_quantization}

According to Paragraph 5.2 in \cite{graf2007foundations}, the set $(K_i)_{1\leq i \leq n}$ is a solution of \eqref{eq:quantization_problem} if and only if for any $i$, $A_i$ has positive Lebesgue measure and
$$
\int_{A_i}|K_i - x|^{p-1}\text{sgin}(x-K_i)\mathbb{P}^{mkt}(\mathrm{d}x) = 0,
$$
where $\text{sgin}$ is the sign function. This is equivalent to
$$
K_i = \frac{ \int_{A_i} |K_i - x|^{p-2} x\mathbb{P}^{mkt}(\mathrm{d}x) }{ \int_{A_i} |K_i - x|^{p-2}\mathbb{P}^{mkt}(\mathrm{d}x) } = \frac{ \mathbb{E}^{mkt}[|K_i - K|^{p-2} K\mathbf{1}_{K\in A_i}] }{ \mathbb{E}^{mkt} [|K_i - K|^{p-2} \mathbf{1}_{K\in A_i}] }.
$$
Thus $(K_i)_{1\leq i \leq N}$ is the solution of \eqref{eq:quantization_problem} if and only if it is a fixed point of the Lloyd's algorithm. \\

We now give proofs and technical results for Section \ref{Section Incentive policy of the exchange}. They are mostly inspired by \cite{el2018optimal}. However, for sake of completeness, we provide rigorous derivations. 

\subsection{Stochastic basis}
\subsubsection{Canonical process}
In this section, we give an accurate definition of the probability space defined in Section \ref{Subsection The Market}. We consider a final horizon time $T>0$ and the space $\Omega=:\Omega_{c} \times \Omega_{d}^{2\times\#\mathcal{T}\times\#\mathcal{K}}$, with $\Omega_{c}$ the set of continuous functions from $[0,T]$ into $\mathbb{R}$ and $\Omega_{d}$ the set of piecewise constant c\`adl\`ag functions from $[0,T]$ into $\mathbb{N}$. We consider $\Omega$ as a subspace of the Skorokhod space $\Dc([0,T],\R^{2\times\#\mathcal{T}\times\#\mathcal{K}+1})$ of c\`adl\`ag functions from $[0,T]$ into $\R^{2\times\#\mathcal{T}\times\#\mathcal{K}+1}$ and $\Fc$ the trace Borel $\sigma-$algebra on $\Omega$, where the topology is the one associated to the usual Skorokhod distance on $\Dc([0,T],\R^{2\times\#\mathcal{T}\times\#\mathcal{K}+1})$. \\

We define $(\mathcal{X}_{t})_{t\in [0,T]}:=\big(W_{t},(N_{t}^{k,\tau,i})_{i= a, b; k\in\mathcal{K}; \tau\in\mathcal{T} } \big)$ as the canonical process on $\Omega$, that is for any $\omega=(w,n^{k,\tau,i})\in \Omega$
\begin{align*}
W_{t}(\omega)=w(t), \; N_{t}^{k,\tau,i}(\omega)=n^{k,\tau,i}(t).
\end{align*}

\subsubsection{Probability measure}
\label{subsubsection probability measure}
We now properly define $\mathbb{P}^{0}$ and the associated change of measure. We set the probability $\mathbb{P}^{0}$ on $(\Omega,\Fc)$ such that under $\P^0$, $W$, $N^{k,\tau,i}$ are independent, $W$ is a one--dimensional Brownian motion and the $N^{k,\tau,i}, k\in\mathcal{K},\tau\in\mathcal{T},i=a,b$ are Poisson processes with intensity $\lambda^{k,\tau,i}(0)$.\footnote{In other words, $\P^0$ is simply the product measure of the Wiener measure on $\Omega_c$ and the unique measure on $\Omega_d^{2\times \#\mathcal{T}\times\#\mathcal{K}}$ that makes the canonical process an homogeneous Poisson process with the prescribed intensity.} Finally, we endow the space $(\Omega,\mathcal{F})$ with the ($\P^0-$completed) canonical filtration $\mathbb{F}:=(\mathcal{F}_{t})_{t\in[0,T]}$ generated by $(\mathcal{X}_{t})_{t\in[0,T]}$. \\

By \eqref{Admissible controls market maker}, the control process must be predictable and uniformly bounded. The last assumption is required to define the associated probability measure. So for $\delta \in \mathcal{A}$ we introduce the corresponding probability measure $\mathbb{P}^{\delta}$ under which $S_t= S_0+\sigma W_t$ follows \eqref{Definition efficient price} and for $k\in \mathcal{K}, \tau\in \mathcal{T}, i\in \{a,b\}$ the
$$
N_{t}^{\delta,k,\tau,i}:=N_{t}^{k,\tau,i}-\int_{0}^{t}\lambda^{k,\tau}(\delta_{r}^{k,\tau,i})\mathbbm{1}_{\{\phi(i)Q_{r^{-}}>-\overline{q}\}}\mathrm{d}r
$$
are martingales. This probability measure is defined by the corresponding Doléans-Dade exponential:

\begin{align*}
&L_{t}^{\delta}:=\text{exp}\Bigg(\underset{i=a,b}{\sum}\sum_{(k,\tau)\in \mathcal{K}\times\mathcal{T}}\int_{0}^{t}\mathbbm{1}_{\{\phi(i)Q_{r^{-}}>-\overline{q}\}}\bigg(\text{log}\Big(\frac{\lambda^{k,\tau}(\delta_{r}^{k,\tau,i})}{A}\Big)\mathrm{d}N_{r}^{k,\tau,i}-\Big(\lambda^{k,\tau}(\delta_{r}^{k,\tau,i})-A\Big)\mathrm{d}r\bigg)\Bigg),
\end{align*}
which is a true martingale by the uniform boundedness of $\delta_{t}^{k,\tau,i}$.\footnote{The associated Novikov criterion is given in \cite{sokol2013optimal}.} We can therefore define the Girsanov change of measure $\frac{\mathrm{d}\mathbb{P}^{\delta}}{\mathrm{d}\mathbb{P}^{0}}|_{\mathcal{F}_{t}}=L_{t}^{\delta}$, $\text{ for all } t\in [0,T]$. In particular, all the probability measures $\mathbb{P}^{\delta}$ indexed by $\delta \in \mathcal{A}$ are equivalent. We shall write $\mathbb{E}^{\delta}_{t}$ for the conditional expectation with respect to $\mathcal{F}_{t}$ under the probability measure $\mathbb{P}^{\delta}$. 

\subsection{Well-posedness of the optimization problems}\label{subsection wellposedness}
We give in this section the necessary integrability conditions ensuring that both exchange and market maker's problems are well defined. We consider the following assumptions: 
\begin{align}\label{Condition wellposedness control problems}
\underset{\delta \in \mathcal{A}}{\text{sup }}\mathbb{E}^{\delta}\bigg[\text{exp}\Big(-\gamma'\xi\Big)\bigg]<+\infty, \text{ for some } \gamma'>\gamma, \quad \underset{\delta \in \mathcal{A}}{\text{sup }}\mathbb{E}^{\delta}\bigg[\text{exp}\Big(\eta'\xi\Big)\bigg]<+\infty, \text{ for some } \eta'>\eta. 
\end{align}
Moreover, the next technical assumption is required in order to derive the best response of the market maker in Theorem \ref{Theorem Market Maker}:
\begin{align}
\underset{\delta \in \mathcal{A}}{\text{sup}}\text{ }\underset{t\in [0,T]}{\text{sup}}\mathbb{E}^{\delta}\bigg[\text{exp}\Big(-\gamma' Y_{t}^{0,Z}\Big)\bigg]< +\infty, \text{ for some } \gamma'>\gamma.
\label{Strong integrability MM}
\end{align}
Finally, we define $\mathcal{Z}$ as the set of predictable processes $(Z_t)_{t\in[0,T]}$ such that Conditions \eqref{Condition wellposedness control problems} and \eqref{Strong integrability MM} are satisfied. This is the set of admissible contract components of the exchange. 

\subsection{Dynamic programming principle}
In the spirit of \cite{el2018optimal}, we provide a proof of a dynamic programming principle for the market maker's problem. Note that a same type of dynamic programming principle exists for the exchange's problem. \\

For any $\mathbb{F}$ stopping time $\tau \in [t,T]$ and $\mu \in \mathcal{A}_{\tau}$, we define

\begin{align*}
&J_{T}(\tau,\mu)=\mathbb{E}_{\tau}^{\mu}\bigg[-\text{exp}\bigg(-\gamma\Big(\xi+\underset{i=a,b}{\sum}\sum_{(k,\tau)\in \mathcal{K}\times\mathcal{T}}\int_{\tau}^{T}\mu_{u}^{k,\tau,i}\mathrm{d}N_{u}^{k,\tau,i}+Q^{k,\tau}_{u}\mathrm{d}C_{u}^{k,\tau}\Big)\bigg)\bigg]
\end{align*}

where $\mathcal{A}_{\tau}$ denotes the restriction of $\mathcal{A}$ to controls on $[\tau,T]$. We also define the set $\mathcal{J}_{\tau,T}=(J_T(\tau,\mu))_{\mu \in \mathcal{A}_{\tau}}$. The continuation utility of the market maker is defined for any
$\mathcal{F}$-stopping time $\tau$ by 

\begin{align*}
V_{\tau}=\underset{\mu\in \mathcal{A}_{\tau}}{\text{ess sup }}J_{T}(\tau,\mu).
\end{align*}

We first prove the following technical lemma.

\begin{lemma}\label{Lemma 6.1}
Let $\tau$ be a stopping time with values in $[t,T]$. Then there exists an increasing sequence $(\mu^{n})_{n\in \mathbb{N}}$ in $\mathcal{A}_{\tau}$ such that $V_{\tau}=\lim_{n\rightarrow +\infty}J_{T}(\tau,\mu^{n})$.
\end{lemma}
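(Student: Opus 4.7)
The plan is to show that the family $\mathcal{J}_{\tau,T}=\{J_T(\tau,\mu):\mu\in\mathcal{A}_\tau\}$ is \emph{upward directed} in the a.s.\ order, and then invoke the classical result that the essential supremum of an upward-directed family is the a.s.\ limit of an increasing sequence drawn from the family (see e.g.\ Neveu, \emph{Discrete-Parameter Martingales}, Prop.~VI-1-1). That standard fact then delivers the sequence $(\mu^n)$ with $J_T(\tau,\mu^n)\uparrow V_\tau$ directly.

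To establish upward directedness, given $\mu^1,\mu^2\in\mathcal{A}_\tau$ I would set
$$
A:=\{J_T(\tau,\mu^1)\geq J_T(\tau,\mu^2)\}\in\mathcal{F}_\tau,\qquad \mu^3:=\mu^1\mathbf{1}_A+\mu^2\mathbf{1}_{A^c}.
$$
Since $\mathbf{1}_A$ is $\mathcal{F}_\tau$-measurable and both $\mu^1,\mu^2$ are predictable on $(\tau,T]$ and bounded by $\delta_\infty$, the process $\mu^3$ is predictable and uniformly bounded, hence $\mu^3\in\mathcal{A}_\tau$. The goal is then the identity
$$
J_T(\tau,\mu^3)=\mathbf{1}_A\,J_T(\tau,\mu^1)+\mathbf{1}_{A^c}\,J_T(\tau,\mu^2)=\max\bigl(J_T(\tau,\mu^1),J_T(\tau,\mu^2)\bigr),
$$
from which the directedness is immediate.

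To prove the identity I would rewrite $\mathbb{E}^{\mu^3}_\tau[\cdot]$ under $\mathbb{P}^0$ via $\mathbb{E}^{\mu^3}_\tau[X]=\mathbb{E}^0_\tau[(L^{\mu^3}_T/L^{\mu^3}_\tau)X]$. The conditional Doléans--Dade density $L^{\mu^3}_T/L^{\mu^3}_\tau$ only involves the integrals $\int_\tau^T$ in the definition of $L^\delta$, so it depends solely on the values of $\mu^3$ on $(\tau,T]$; on $A$ it therefore coincides with $L^{\mu^1}_T/L^{\mu^1}_\tau$, and on $A^c$ with $L^{\mu^2}_T/L^{\mu^2}_\tau$. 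The very same observation applies to the pay-off random variable: $Q^{k,\tau}_\tau$ is $\mathcal{F}_\tau$-measurable, the dynamics of $(Q^{k,\tau}_u)_{u\in[\tau,T]}$ depend only on the post-$\tau$ control, and the driving increments $\mathrm{d}N^{k,\tau,i}_u,\mathrm{d}C^{k,\tau}_u$ are the canonical ones, so on $A$ the integrand in the exponential coincides with the one corresponding to $\mu^1$, and on $A^c$ with that of $\mu^2$. Pulling the $\mathcal{F}_\tau$-measurable indicators $\mathbf{1}_A,\mathbf{1}_{A^c}$ out of the conditional expectation yields the desired identity.

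The main obstacle, and the only step that requires some care beyond bookkeeping, is this identification of $J_T(\tau,\mu^3)$: one must argue that the Girsanov density splits cleanly across $\tau$, so that conditioning on $\mathcal{F}_\tau$ turns the pasted control $\mu^3$ into a genuine pathwise choice between the two continuation problems. Once the upward-directedness is in hand, the conclusion is purely measure-theoretic and requires no additional structure from the model.
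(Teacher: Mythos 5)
Your proposal is correct and follows essentially the same route as the paper: establish that $\mathcal{J}_{\tau,T}$ is upward directed by pasting the two controls along the $\mathcal{F}_\tau$-measurable set $\{J_T(\tau,\mu^1)\geq J_T(\tau,\mu^2)\}$, then invoke the classical essential-supremum property for directed families. Your additional verification that the Girsanov density and the pay-off split across $\tau$, so that $J_T(\tau,\mu^3)=\max\bigl(J_T(\tau,\mu^1),J_T(\tau,\mu^2)\bigr)$, simply makes explicit what the paper asserts directly from the definition of the pasted control.
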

\begin{proof}For $\mu, \mu' \in \mathcal{A}_{\tau}$ we define
\begin{align*}
\hat{\mu}:=\mu\mathbbm{1}_{\{J_{T}(\tau,\mu)\geq J_{T}(\tau,\mu') \} }+\mu'\mathbbm{1}_{\{J_{T}(\tau,\mu)\leq J_{T}(\tau,\mu') \} }.     
\end{align*}
We have $\hat{\mu}\in \mathcal{A}_{\tau}$ and by definition of $\hat{\mu}$, $J_{T}(\tau,\hat{\mu})\geq \max(J_{T}(\tau,\mu),J_{T}(\tau,\mu'))$. Thus $\mathcal{J}_{\tau,T}$ is increasing, and we obtain the same result as in \cite{el2018optimal}. The conclusion follows.
 \end{proof}

We set
\begin{align*}
\mathcal{D}_{t,T}(\delta):=\textup{exp}\bigg(-\gamma\Big(\underset{i=a,b}{\sum}\sum_{(k,\tau)\in \mathcal{K}\times\mathcal{T}}\int_{t}^{T}\delta_{u}^{k,\tau,i}\mathrm{d}N_{u}^{k,\tau,i}+Q^{k,\tau}_{u}\mathrm{d}C_{u}^{k,\tau}\Big)\bigg).
\end{align*}
Given Lemma \ref{Lemma 6.1}, we can now prove the dynamic programming principle associated to \eqref{Market Maker's Problem}.
\begin{lemma}\label{Lemma 6.2}
Let $t\in [0,T]$ and $\tau$ be an $\mathbb{F}$ stopping time with values in [t,T]. Then
\begin{align*}
&V_{t}=\underset{\delta\in \mathcal{A}}{\textup{ess sup }}\mathbb{E}_{t}^{\delta}\bigg[-\mathcal{D}_{t,\tau}(\delta)V_{\tau}\bigg]. 
\end{align*}
\end{lemma}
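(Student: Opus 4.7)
The plan is to establish the two inequalities separately, exploiting the multiplicative decomposition $\mathcal{D}_{t,T}(\delta)=\mathcal{D}_{t,\tau}(\delta)\,\mathcal{D}_{\tau,T}(\delta)$ (immediate from the definition of $\mathcal{D}$) together with the tower property of conditional expectation under $\mathbb{P}^{\delta}$. Throughout, the crucial observations are that $\mathcal{D}_{t,\tau}(\delta)$ is $\mathcal{F}_{\tau}$-measurable and nonnegative, so it can be pulled out of the inner conditional expectation, and that $J_{T}(\tau,\mu)\leq V_{\tau}\leq 0$ almost surely by the very definition of the essential supremum.

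For the inequality bounding $V_{t}$ from above, I would fix $\delta\in\mathcal{A}$, write $-\mathcal{D}_{t,T}(\delta)\exp(-\gamma\xi)=\mathcal{D}_{t,\tau}(\delta)\cdot\bigl(-\mathcal{D}_{\tau,T}(\delta)\exp(-\gamma\xi)\bigr)$, and apply the tower property to obtain
\[
J_{T}(t,\delta)=\mathbb{E}^{\delta}_{t}\bigl[\mathcal{D}_{t,\tau}(\delta)\,J_{T}(\tau,\delta|_{[\tau,T]})\bigr]\leq \mathbb{E}^{\delta}_{t}\bigl[\mathcal{D}_{t,\tau}(\delta)\,V_{\tau}\bigr],
\]
the last inequality using the sign observation above. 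Taking essential supremum over $\delta\in\mathcal{A}$ gives one direction.

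For the reverse inequality, I would apply Lemma \ref{Lemma 6.1} to obtain an increasing sequence $(\mu^{n})_{n\in\mathbb{N}}\subset\mathcal{A}_{\tau}$ with $J_{T}(\tau,\mu^{n})\nearrow V_{\tau}$ a.s., and for any given $\delta\in\mathcal{A}$ form the concatenated control $\delta^{n}:=\delta\,\mathbf{1}_{[0,\tau]}+\mu^{n}\,\mathbf{1}_{(\tau,T]}$. This control is predictable and uniformly bounded by $\delta_{\infty}$, hence in $\mathcal{A}$; moreover, since $\delta^{n}$ agrees with $\delta$ on $[0,\tau]$, the associated Dol\'eans--Dade exponentials coincide at $\tau$, so $\mathbb{P}^{\delta^{n}}|_{\mathcal{F}_{\tau}}=\mathbb{P}^{\delta}|_{\mathcal{F}_{\tau}}$. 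A double application of the tower property then yields $J_{T}(t,\delta^{n})=\mathbb{E}^{\delta}_{t}[\mathcal{D}_{t,\tau}(\delta)\,J_{T}(\tau,\mu^{n})]$. Passing to the limit $n\to\infty$ by dominated convergence --- the sequence is monotone, nonpositive, and dominated below by the $\mathbb{P}^{\delta}$-integrable variable $\mathcal{D}_{t,\tau}(\delta)J_{T}(\tau,\mu^{1})$ thanks to the uniform bound on $\delta$ and condition \eqref{Condition wellposedness control problems} --- gives $V_{t}\geq J_{T}(t,\delta^{n})\longrightarrow \mathbb{E}^{\delta}_{t}[\mathcal{D}_{t,\tau}(\delta)V_{\tau}]$, and essential supremum over $\delta$ concludes.

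The main obstacle is the concatenation step: one must verify that $\delta^{n}$ remains in $\mathcal{A}$ across the random time $\tau$ (predictability and uniform bound), check that $\mathbb{P}^{\delta^{n}}|_{\mathcal{F}_{\tau}}=\mathbb{P}^{\delta}|_{\mathcal{F}_{\tau}}$ to legitimise the rewriting in the conditional expectation, and secure the uniform domination needed to interchange limit and $\mathbb{E}^{\delta}_{t}$. The upward directedness of $\mathcal{J}_{\tau,T}$ already extracted in Lemma \ref{Lemma 6.1} is what removes the need for any abstract measurable selection argument; once this is granted, the remaining steps are routine applications of the tower property and dominated convergence.
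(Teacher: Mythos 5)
Your proposal is correct and follows essentially the same route as the paper's proof: the same decomposition $\mathcal{D}_{t,T}(\delta)=\mathcal{D}_{t,\tau}(\delta)\,\mathcal{D}_{\tau,T}(\delta)$ with the tower property and the Bayes-rule observation that the conditional expectation given $\mathcal{F}_{\tau}$ only sees the control after $\tau$ (equivalently, that concatenated controls leave $L^{\delta}_{\tau}$ unchanged), together with the concatenation $\delta\otimes_{\tau}\mu^{n}$ and the monotone sequence from Lemma \ref{Lemma 6.1} to pass to the limit for the reverse inequality. The only cosmetic difference is that you invoke dominated rather than monotone convergence (both apply here), and note that, like the paper's own proof, you actually establish the identity with $\mathbb{E}_{t}^{\delta}\big[\mathcal{D}_{t,\tau}(\delta)V_{\tau}\big]$, the minus sign in the displayed statement being a typo.
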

\begin{proof} Let $t\in [0,T]$ and $\tau$ be a stopping time with values in $[t,T]$. First, by tower property, we have
\begin{align*}
& V_{t}=\underset{\delta\in \mathcal{A}}{\text{ess sup }}\mathbb{E}_{t}^{\delta}\bigg[-\mathcal{D}_{t,T}(\delta)\text{exp}\big(-\gamma\xi\big)\bigg]\\
&\hspace{1em} =\underset{\delta\in \mathcal{A}}{\text{ess sup }}\mathbb{E}_{t}^{\delta}\bigg[\mathcal{D}_{t,\tau}(\delta)\mathbb{E}_{\tau}^{\delta}\Big[-\mathcal{D}_{\tau,T}(\delta)\text{exp}\big(-\gamma\xi\big)\Big]\bigg].
\end{align*}
Then, Bayes rule yields
\begin{align*}
&\mathbb{E}_{\tau}^{\delta}\bigg[-\mathcal{D}_{\tau,T}(\delta)\text{exp}\big(-\gamma\xi\big)\bigg]=\mathbb{E}^{0}_{\tau}\bigg[-\frac{L_{T}^{\delta}}{L_{\tau}^{\delta}}\mathcal{D}_{\tau,T}(\delta)\text{exp}\big(-\gamma\xi\big)\bigg]\\
& \hspace{11.5em} \leq \underset{\delta \in \mathcal{A}}{\text{ess sup }}\mathbb{E}_{\tau}^{\delta}\bigg[\mathcal{D}_{\tau,T}(\delta)\text{exp}\big(-\gamma\xi\big)\bigg]\\
&\hspace{11.5em}=V_{\tau}.
\end{align*}
Finally we obtain 
\begin{align*}
& V_{t}\leq \underset{\mu \in \mathcal{A}}{\text{ess sup }}\mathbb{E}^{\mu}_{t}\bigg[V_{\tau}\mathcal{D}_{t,\tau}(\delta)\bigg].
\end{align*}
We next prove the reverse inequality. Let $\delta\in \mathcal{A}$ and $\mu\in \mathcal{A}_{\tau}$. We define $(\delta\otimes_{\tau}\mu)_{u}= \delta_{u}1_{\{0\leq u \leq \tau\}} + \mu_{u}1_{\{\tau \leq u \leq T\}}$. Then $\delta\otimes_{\tau}\mu \in \mathcal{A}$ and by tower property
\begin{align*}
&V_{t}\geq \mathbb{E}_{t}^{\delta\otimes_{\tau}\mu}\bigg[-\mathcal{D}_{\tau,T}(\mu)\mathcal{D}_{t,\tau}(\delta)\text{exp}\big(-\gamma\xi\big)\bigg] =\mathbb{E}_{t}^{\delta\otimes_{\tau}\mu}\bigg[\mathbb{E}_{\tau}^{\delta\otimes_{\tau}\mu}\Big[-\mathcal{D}_{\tau,T}(\mu)\text{exp}(-\gamma\xi)\Big]\mathcal{D}_{t,\tau}(\delta)\bigg].
\end{align*}
Using Bayes formula and noting that $\frac{L_{T}^{\delta\otimes_{\tau}\mu}}{L_{\tau}^{\delta\otimes_{\tau}\mu}}=\frac{L_{T}^{\mu}}{L_{\tau}^{\mu}}$, we have
\begin{align*}
& \mathbb{E}_{\tau}^{\delta\otimes_{\tau}\mu}\Big[-\mathcal{D}_{\tau,T}(\mu)\text{exp}(-\gamma\xi)\Big]=\mathbb{E}^{0}_{\tau}\bigg[-\frac{L_{T}^{\mu}}{L_{\tau}^{\mu}}\mathcal{D}_{\tau,T}(\mu)\text{exp}(-\gamma\xi)\bigg]\\
& \hspace{12.5em} =J_{T}(\tau,\mu).
\nonumber
\end{align*}
This implies
\begin{align*}
& V_{t}\geq \mathbb{E}_{t}^{\delta\otimes_{\tau}\mu}\bigg[\mathcal{D}_{t,\tau}(\delta)J_{T}(\tau,\mu)\bigg].
\end{align*}
We can therefore use again Bayes rule and the fact that $\frac{L_{\tau}^{\delta\otimes_{\tau}\mu}}{L_{t}^{\delta\otimes_{\tau}\mu}}=\frac{L_{\tau}^{\delta}}{L_{t}^{\delta}}$ to obtain
\begin{align*}
& V_{t}\geq \mathbb{E}_{t}^{0}\bigg[\frac{L_{T}^{\delta\otimes_{\tau}\mu}}{L_{t}^{\delta\otimes_{\tau}\mu}}\mathcal{D}_{t,\tau}(\delta)J_{T}(\tau,\mu)\bigg]=\mathbb{E}_{t}^{0}\Bigg[\mathbb{E}_{\tau}^{0}\bigg[\frac{L_{T}^{\delta\otimes_{\tau}\mu}}{L_{\tau}^{\delta\otimes_{\tau}\mu}}\frac{L_{\tau}^{\delta\otimes_{\tau}\mu}}{L_{t}^{\delta\otimes_{\tau}\mu}}\mathcal{D}_{t,\tau}(\delta)J_{T}(\tau,\mu)\bigg]\Bigg] \\
&\hspace{14em} =\mathbb{E}_{t}^{0}\Bigg[\mathbb{E}_{\tau}^{0}\bigg[\frac{L_{T}^{\delta\otimes_{\tau}\mu}}{L_{\tau}^{\delta\otimes_{\tau}\mu}}\bigg]\frac{L_{\tau}^{\delta\otimes_{\tau}\mu}}{L_{t}^{\delta\otimes_{\tau}\mu}}\mathcal{D}_{t,\tau}(\delta)J_{T}(\tau,\mu)\Bigg]\\
&\hspace{14em} =\mathbb{E}_{t}^{0}\bigg[\frac{L_{\tau}^{\delta\otimes_{\tau}\mu}}{L_{t}^{\delta\otimes_{\tau}\mu}}\mathcal{D}_{t,\tau}(\delta)J_{T}(\tau,\mu)\bigg] \\
&\hspace{14em} =\mathbb{E}_{t}^{\delta}\bigg[\mathcal{D}_{t,\tau}(\delta)J_{T}(\tau,\mu)\bigg].
\end{align*}
Since the previous inequality holds for any $\mu\in \mathcal{A}_{\tau}$, we deduce from monotone convergence theorem together with Lemma \ref{Lemma 6.1} that there exists a sequence $(\mu^{n})_{n\in \mathbb{N}}$ of controls in $\mathcal{A}_{\tau}$ such that
\begin{align*}
& V_{t}\geq \underset{n\rightarrow +\infty}{\text{lim}}\mathbb{E}_{t}^{\delta}\bigg[\mathcal{D}_{t,\tau}(\delta)J_{T}(\tau,\mu^{n})\bigg]=\mathbb{E}_{t}^{\delta}\bigg[\mathcal{D}_{t,\tau}(\delta)\underset{n\rightarrow +\infty}{\text{lim}}J_{T}(\tau,\mu^{n})\bigg] \\
&\hspace{14em} =\mathbb{E}_{t}^{\delta}\bigg[\mathcal{D}_{t,\tau}(\delta)V_{\tau}\bigg].
\end{align*}
This concludes the proof. 
\end{proof}

\subsection{Proof of Lemma \ref{Lemma Contract Representation}}\label{Proof equivalence admissible contracts}

We divide the proof into six steps.\\

\textbf{Step 1: Derivation of the martingale representation}. \\

For $\delta \in \mathcal{A}$, it follows from the dynamic programming principle of Lemma \ref{Lemma 6.2} that the process 
\begin{align*}
& U_{t}^{\delta}=V_{t}\mathcal{D}_{0,t}(\delta)
\end{align*}
defines a $\mathbb{P}^{\delta}$-supermartingale for any $\delta\in \mathcal{A}$. By standard analysis, we may then consider it  in  its  càdlàg  version  (by  taking  right  limits  along  rationals). By  the  Doob-Meyer decomposition, we can write
$U_{t}^{\delta}=M_{t}^{\delta}-A_{t}^{\delta}$ where $M^{\delta}$ is a $\mathbb{P}^{\delta}$-martingale and $A_t^{\delta}=A_{t}^{\delta,c}+A_{t}^{\delta,d}$ is an integrable non-decreasing predictable process such that $A_{0}^{\delta,c}=A_{0}^{\delta,d}=0$ with pathwise continuous component $A^{\delta,c}$ and with $A^{\delta,d}$ a piecewise constant predictable process. \\

From the martingale representation theorem under $\mathbb{P}^{\delta}$, see Appendix A.1 in \cite{el2018optimal},  there exists $\tilde{Z}^{\delta}=(\tilde{Z}^{\delta,S},\tilde{Z}^{\delta,k,\tau,i})_{k\in\mathcal{K},\tau \in \mathcal{T}, i=a,b}$ predictable, such that
\begin{align*}
M_{t}^{\delta}=V_{0}+\int_{0}^{t}\tilde{Z}_{r}^{\delta,S}\mathrm{d}S_{r}+\underset{i=a,b}{\sum}\sum_{(k,\tau)\in \mathcal{K}\times\mathcal{T}}\int_{0}^{t}\tilde{Z}_{r}^{\delta,k,\tau,i}\mathrm{d}N_{r}^{\delta,k,\tau,i}.
\end{align*}
\textbf{Step 2: Boundedness of the value function}. \\

We show that $V$ is a negative process.  In fact, thanks to the uniform boundedness of $\delta\in \mathcal{A}$, we have that 
$$
\frac{L_{T}^{\delta}}{L_{t}^{\delta}}\geq \alpha_{t,T}= \text{exp}\bigg(-\underset{i=a,b}{\sum}\sum_{(k,\tau)\in \mathcal{K}\times\mathcal{T}}\frac{k}{\sigma}N_{T}^{k,\tau,i}-2\times\#\mathcal{T}\times\#\mathcal{K}A\mathrm{e}^{-\frac{kc_{\infty}}{\sigma}}(\mathrm{e}^{\frac{k}{\sigma}}+1)(T-t)\bigg),
$$
where $c_{\infty}:=\underset{k,\tau}{\text{max }}c^{k,\tau}$.
Therefore
\begin{align*}
V_{t}\leq \mathbb{E}^{0}_{t}\Big[-\alpha_{t,T}\text{exp}\Big(-\gamma(\delta_{\infty}\underset{i=a,b}{\sum}\sum_{(k,\tau)\in \mathcal{K}\times\mathcal{T}}N_{T}^{k,\tau,i}+\int_{t}^{T}Q^{k,\tau}_{u}\mathrm{d}C_{u}^{k,\tau})\Big)\mathrm{e}^{-\gamma \xi}\Big]<0.
\end{align*}

\textbf{Step 3: Identification of the coefficients (1/2)}. \\

Let $Y$ be  the  process  defined for any $t\in[0,T]$ by $V_{t}=-\mathrm{e}^{-\gamma Y_{t}}$. As $A^{\delta,d}$ is a predictable point process and the jumps of $N^{k,\tau,i}, i=a,b$ are totally inaccessible stopping times under $\mathbb{P}^{0}$, we have $\big\langle N^{k,\tau,i},A^{\delta,d}\big\rangle_t=0$ a.s. Using Ito’s formula, we obtain that
\begin{align*}
Y_{T}=\xi, \text{ and } \mathrm{d} Y_{t}=\underset{i=a,b}{\sum}\sum_{(k,\tau)\in \mathcal{K}\times\mathcal{T}}Z_{t}^{k,\tau,i}\mathrm{d}N_{t}^{k,\tau,i}+Z_{t}^{S}\mathrm{d}S_{t}-\mathrm{d}I_{t}-\mathrm{d}\tilde{A}_{t}^{d},
\end{align*}
with
\begin{align*}
& Z_{t}^{k,\tau,a}=-\frac{1}{\gamma}\text{log}\Big(1+\frac{\tilde{Z}_{t}^{\delta,k,\tau,a}}{U_{t^{-}}^{\delta}}\Big)-\delta_{t}^{k,\tau,a} \\
& Z_{t}^{k,\tau,b}=-\frac{1}{\gamma}\text{log}\Big(1+\frac{\tilde{Z}_{t}^{\delta,k,\tau,b}}{U_{t^{-}}^{\delta}}\Big)-\delta_{t}^{k,\tau,b} \\
& Z_{t}^{S}=-\frac{\tilde{Z}_{t}^{\delta,S}}{\gamma U_{t^{-}}^{\delta}}-\sum_{(k,\tau)\in \mathcal{K}\times\mathcal{T}}Q^{k,\tau}_{t^{-}}\Delta^{k,\tau} \\
& I_{t}=\int_{0}^{t}\Big(\overline{h}(\delta_{r},Z_{r},Q_{r})\mathrm{d}r -\frac{1}{\gamma U_{r}^{\delta}}\mathrm{d}A_{r}^{\delta,c}\Big) \\
& \overline{h}(\delta,Z_{t},Q_{t})=h(\delta,Z_{t},Q_{t})-\frac{1}{2}\gamma\sigma^{2}(Z_{t}^{S})^{2} \\
& \tilde{A}_{t}^{d}=\frac{1}{\gamma}\sum_{s\leq t}\text{log}\Big(1-\frac{\Delta A_{t}^{\delta,d}}{U_{t^{-}}^{\delta}}\Big).
\end{align*}
In particular, the last relation between $\tilde{A}^{d}$ and $A^{\delta,d}$ shows that $\Delta a_{t}\geq 0$ is independent of $\delta\in \mathcal{A}$, with $a_t=-\frac{A_{t}^{\delta,d}}{U_{t^{-}}^{\delta}}$ and abusing notations slightly, $\Delta a_{t}=-\frac{\Delta A_{t}^{\delta,d}}{U_{t^{-}}^{\delta}}$. \\

In order to complete the proof, we argue in the subsequent steps that $Z \in \mathcal{Z}$ and that, for $t\in [0,T]$, $A_{t}^{\delta,d}=-\sum_{s\leq t}U_{s^{-}}^{\delta}\Delta a_{s}=0$ so that $\tilde{A}_{t}^{d}=0$ and $I_{t}=\int_{0}^{t}\overline{H}(Z_{r},Q_{r})\mathrm{d}r$, where
\begin{align*}
\overline{H}(Z_{t},Q_{t})=H(Z_{t},Q_{t})-\frac{1}{2}\gamma\sigma^{2}(Z_{t}^{S})^{2}.    
\end{align*}

\textbf{Step 4: Identification of the coefficients (2/2).} \\

Since $V_{T}=-1$, we get that
\begin{align*}
& 0=\underset{\delta\in \mathcal{A}}{\text{sup }}\mathbb{E}^{\delta}[U_{T}^{\delta}]-V_{0}\\
& \hspace{0.4em}=\underset{\delta\in \mathcal{A}}{\text{sup }}\mathbb{E}^{\delta}[U_{T}^{\delta}-M_{T}^{\delta}]\\
& \hspace{0.4em}=\gamma\underset{\delta\in \mathcal{A}}{\text{sup }}\mathbb{E}^{0}\Big[L_{T}^{\delta}\int_{0}^{T}U_{r^{-}}^{\delta}(\mathrm{d}I_{r}-\overline{h}(\delta,Z_{r},Q_{r})\mathrm{d}r+\frac{\mathrm{d}a_{r}}{\gamma})\Big].
\end{align*}
Moreover, the controls being uniformly bounded, we have 
\begin{align*}
U_{t}^{\delta}\leq -\beta_{t}=V_{t}\text{exp}\bigg(-\gamma\Big(\delta_{\infty}\underset{i=a,b}{\sum}\sum_{(k,\tau)\in \mathcal{K}\times\mathcal{T}}N_{T}^{k,\tau,i}+\int_{0}^{t}Q^{k,\tau}_{u}\mathrm{d}C_{u}^{k,\tau}\Big)\bigg)<0.
\end{align*}
Then, using $A^{\delta,d}\geq 0, U^{\delta}\leq 0$ and $\mathrm{d}I_{t}-\overline{h}(\delta,Z_{t},Q_{t})\mathrm{d}t\geq 0$, we obtain 

\begin{align*}
& 0\leq \underset{\delta\in \mathcal{A}}{\text{sup }}\mathbb{E}^{0}\Big[\alpha_{0,T}\int_{0}^{T}-\beta_{r^{-}}\big(\mathrm{d}I_{r}-\overline{h}(\delta,Z_{r},Q_{r})\mathrm{d}r + \frac{\mathrm{d}a_{r}}{\gamma}\big)\Big]\\
& \hspace{0.4em} =-\mathbb{E}^{0}\Big[\alpha_{0,T}\int_{0}^{T}\beta_{r^{-}}\big(\mathrm{d}I_{r}-\overline{H}(Z_{r},Q_{r})\mathrm{d}r + \frac{\mathrm{d}a_{r}}{\gamma}\big)\Big].
\end{align*}

The quantities $\alpha_{0,T}\int_{0}^{T}\beta_{r^{-}}(\mathrm{d}I_{r}-\overline{H}(Z_{r},Q_{r}))\mathrm{d}r$ and $\alpha_{0,T}\int_{0}^{T}\beta_{r^{-}}\frac{\mathrm{d}a_{r}}{\gamma}$ being non-negative random variables, the result follows. \\

\textbf{Step 5: Admissibility of the process $Z$}. \\

As $\xi$ satisfies the conditions in \eqref{Condition wellposedness control problems}, to prove that $Z\in\mathcal{Z}$, it is enough to show that for some $p>0$
\begin{align*}
\underset{\delta\in \mathcal{A}}{\text{sup}}\underset{t\in [0,T]}{\text{sup }}\mathbb{E}^{\delta}[\exp(-\gamma(p+1)Y_{t})]<+\infty.
\end{align*}
Using Hölder inequality together with the boundedness of the intensities of the $N^{k,\tau,i}$, we have that $\underset{\delta\in \mathcal{A}}{\text{sup }}\mathbb{E}^{\delta}[|U_{T}^{\delta}|^{p'+1}]<+\infty$ for some $p'>0$. We deduce 

\begin{align*}
\underset{\delta\in \mathcal{A}}{\text{sup}}\underset{t\in [0,T]}{\text{sup }}\mathbb{E}^{\delta}[|U_{t}^{\delta}|^{p'+1}]=\underset{\delta\in \mathcal{A}}{\text{sup }}\mathbb{E}^{\delta}[|U_{T}^{\delta}|^{p'+1}]<+\infty
\end{align*}
because $U^{\delta}$ is a $\mathbb{P}^{\delta}$-negative supermartingale. The conclusion follows using again Hölder inequality, the uniform boundedness of the intensities of the $N^{k,\tau,i}$ and the fact that
\begin{align*}
& \exp(-\gamma Y_{t})=U^{\delta}_{t}\text{exp}\bigg(\gamma\Big(\underset{i=a,b}{\sum}\sum_{(k,\tau)\in \mathcal{K}\times\mathcal{T}}\int_{0}^{t}\delta_{u}^{k,\tau,i}\mathrm{d}N_{u}^{k,\tau,i}+Q^{k,\tau}_{u}\mathrm{d}C_{u}^{k,\tau}\Big)\bigg).
\end{align*}

\textbf{Step 6: Uniqueness of the representation}.\\

Let $(Y_{0},Z),(Y_{0}^{'},Z^{'})\in \mathbb{R}\times\mathcal{Z}$ be such that $\xi=Y_{T}^{Y_{0},Z}=Y_{T}^{Y_{0}^{'},Z^{'}}$. By following the lines of the verification argument in the proof of Theorem
\ref{Theorem Market Maker}, we obtain the equality $Y_{t}^{Y_{0},Z}=Y_{t}^{Y_{0}^{'},Z^{'}}$ using the fact that the value of the continuation utility of the market maker satisfies
$$
-\mathrm{e}^{-\gamma Y_{t}^{Y_{0},Z}}=-\mathrm{e}^{-\gamma Y_{t}^{Y_{0}^{'},Z^{'}}}= \underset{\delta\in \mathcal{A}}{\text{ess sup }}\mathbb{E}_{t}^{\delta}\Big[-\mathrm{e}^{-\gamma (PL_{T}^{\delta}-PL_{t}^{\delta}+\xi)}\Big].
$$
This in turn implies that for $t\in[0,T]$ $Z_{t}^{k,\tau,i}\mathrm{d}N_{t}^{k,\tau,i}=Z_{t}^{'k,\tau,i}\mathrm{d}N_{t}^{k,\tau,i}$ and $Z_{t}^{S}\sigma^{2}\mathrm{d}t=Z_{t}^{'S}\sigma^{2}\mathrm{d}t=\mathrm{d}\langle Y,S\rangle_{t}$. Consequently, $(Y_{0},Z)=(Y_{0}^{'},Z^{'})$.

\subsection{Proof of Theorem \ref{Theorem Market Maker}}\label{Proof theorem market maker}

Let $\xi=Y_{T}^{Y_{0},Z}$ with $(Y_{0},Z)\in \mathbb{R}\times\mathcal{Z}$. We  first  prove
that for an arbitrary set of controls $\delta \in \mathcal{A}$, we have $J_{\text{MM}}(\delta,\xi)\leq -\mathrm{e}^{-\gamma Y_{0}}$, where $J_{\text{MM}}(\delta,\xi)$ is such that $V_{\text{MM}}(\xi)=\underset{\delta\in \mathcal{A}}{\text{sup }}J_{\text{MM}}(\delta,\xi)$. Then we will see that this inequality is in fact an equality when the corresponding Hamiltonian $h(\delta,z,q)$ is maximized. Denote 
\begin{align*}
& \overline{Y}_{t}:=Y_{t}^{Y_{0},Z}+\underset{i=a,b}{\sum}\sum_{(k,\tau)\in \mathcal{K}\times\mathcal{T}}\int_{0}^{t}\delta_{u}^{k,\tau,i}\mathrm{d}N_{u}^{k,\tau,i}+Q^{k,\tau}_{u}\mathrm{d}C_{u}^{k,\tau}
\end{align*}
with $t\in [0,T]$. An application of Ito’s formula leads to
\begin{align*}
\mathrm{d}\mathrm{e}^{-\gamma\overline{Y}_{t}} =&\gamma\mathrm{e}^{-\gamma\overline{Y}_{t^{-}}}\Bigg(\!-\!(\sum_{(k,\tau)\in \mathcal{K}\times\mathcal{T}}\!\!\!Q_{t}^{k,\tau}\Delta^{k,\tau}+Z_{t}^{S})\mathrm{d}S_{t}+(H(Z_{t},Q_{t})-h(\delta,Z_{t},Q_{t}))\mathrm{d}t \\
& -\underset{i=a,b}{\sum}\sum_{(k,\tau)\in \mathcal{K}\times\mathcal{T}}\gamma^{-1}\bigg(1-\text{exp}\Big(-\gamma\big(Z_{t}^{k,\tau,i}+\delta_{t}^{k,\tau,i}\big)\Big)\bigg)\mathrm{d}N_{t}^{\delta,k,\tau,i}\Bigg).
\end{align*}
Thus $\mathrm{e}^{-\gamma\overline{Y}_{.}}$ is a $\mathbb{P}^{\delta}$-local submartingale.  Thanks to Condition \eqref{Strong integrability MM}, the uniform boundedness of the intensities of the $N^{k,\tau,i}$ and Hölder inequality, $\Big(\mathrm{e}^{-\gamma\overline{Y}_{t}}\Big)_{t\in [0,T]}$ is uniformly integrable and hence a true submartingale. Doob-Meyer decomposition theorem gives us that
$$
\int_{0}^{\cdot}\gamma\mathrm{e}^{-\gamma\overline{Y}_{t^{-}}}\Bigg(\!-(\!\!\sum_{(k,\tau)\in \mathcal{K}\times\mathcal{T}}\!\!\! Q_{t}^{k,\tau}\Delta^{k,\tau}+Z_{t}^{S})\mathrm{d}S_{t} -\underset{i=a,b}{\sum}\sum_{(k,\tau)\in \mathcal{K}\times\mathcal{T}}\!\!\!\gamma^{-1}\bigg(1-\text{exp}\Big(-\gamma\big(Z_{t}^{k,\tau,i}+\delta_{t}^{k,\tau,i}\big)\Big)\bigg)\mathrm{d}N_{t}^{\delta,k,\tau,i}\Bigg)
$$
is a true martingale. This implies that
\begin{align*}
&J_{\text{MM}}(\delta,\xi)=\mathbb{E}^{\delta}\Big[-\mathrm{e}^{-\gamma\overline{Y}_{T}}\Big]\\
&\hspace{5em}=-\mathrm{e}^{-\gamma Y_{0}}-\mathbb{E}^{\delta}\bigg[\int_{0}^{T}\gamma\mathrm{e}^{-\gamma\overline{Y}_{t^{-}}}\big(H(Z_{t},Q_{t})-h(\delta,Z_{t},Q_{t})\big)\mathrm{d}t\bigg] \\
&\hspace{5em}\leq -\mathrm{e}^{-\gamma Y_{0}}.
\end{align*}
In addition to this, the previous inequality becomes an equality if and only if $\delta$ is chosen as the maximizer of the Hamiltonian $h$, thus leading to the optimal quotes provided in Theorem \ref{Theorem Market Maker}. So we deduce $J_{\text{MM}}(\delta,\xi)=-\mathrm{e}^{-\gamma Y_{0}}$. Finally we have $V_{\text{MM}}(\xi)=-\mathrm{e}^{-\gamma Y_{0}}$ with optimal response $(\hat{\delta}_{t})_{t\in [0,T]}$.

\subsection{Proof of Theorem \ref{Theorem Market Maker}}\label{Proof theorem market maker}

Let $\xi=Y_{T}^{Y_{0},Z}$ with $(Y_{0},Z)\in \mathbb{R}\times\mathcal{Z}$. We  first  prove
that for an arbitrary set of controls $\delta \in \mathcal{A}$, we have $J_{\text{MM}}(\delta,\xi)\leq -\mathrm{e}^{-\gamma Y_{0}}$, where $J_{\text{MM}}(\delta,\xi)$ is such that $V_{\text{MM}}(\xi)=\underset{\delta\in \mathcal{A}}{\text{sup }}J_{\text{MM}}(\delta,\xi)$. Then, we will see that this inequality is in fact an equality when the corresponding Hamiltonian $h(\delta,z,q)$ is maximized. Denote 
\begin{align*}
& \overline{Y}_{t}:=Y_{t}^{Y_{0},Z}+\underset{i=a,b}{\sum}\sum_{(k,\tau)\in \mathcal{K}\times\mathcal{T}}\int_{0}^{t}\delta_{u}^{k,\tau,i}\mathrm{d}N_{u}^{k,\tau,i}+Q^{k,\tau}_{u}\mathrm{d}C_{u}^{k,\tau}
\end{align*}
with $t\in [0,T]$. A direct application of Ito’s formula leads to
\begin{align*}
\mathrm{d}\mathrm{e}^{-\gamma\overline{Y}_{t}} =&\gamma\mathrm{e}^{-\gamma\overline{Y}_{t^{-}}}\Bigg(\!-\!(\sum_{(k,\tau)\in \mathcal{K}\times\mathcal{T}}\!\!\!Q_{t}^{k,\tau}\Delta^{k,\tau}+Z_{t}^{S})\mathrm{d}S_{t}+(H(Z_{t},Q_{t})-h(\delta,Z_{t},Q_{t}))\mathrm{d}t \\
& -\underset{i=a,b}{\sum}\sum_{(k,\tau)\in \mathcal{K}\times\mathcal{T}}\gamma^{-1}\bigg(1-\text{exp}\Big(-\gamma\big(Z_{t}^{k,\tau,i}+\delta_{t}^{k,\tau,i}\big)\Big)\bigg)\mathrm{d}N_{t}^{\delta,k,\tau,i}\Bigg).
\end{align*}
Thus, $\mathrm{e}^{-\gamma\overline{Y}_{.}}$ is a $\mathbb{P}^{\delta}$-local submartingale.  Thanks to Condition \eqref{Strong integrability MM}, the uniform boundedness of the intensities of the $N^{k,\tau,i}$ and Hölder inequality, $\Big(\mathrm{e}^{-\gamma\overline{Y}_{t}}\Big)_{t\in [0,T]}$ is uniformly integrable and hence is a true submartingale. Doob-Meyer decomposition theorem gives us that

\begin{align*}
&\int_{0}^{\cdot}\gamma\mathrm{e}^{-\gamma\overline{Y}_{t^{-}}}\Bigg(\!-(\!\!\sum_{(k,\tau)\in \mathcal{K}\times\mathcal{T}}\!\!\! Q_{t}^{k,\tau}\Delta^{k,\tau}+Z_{t}^{S})\mathrm{d}S_{t} -\underset{i=a,b}{\sum}\sum_{(k,\tau)\in \mathcal{K}\times\mathcal{T}}\!\!\!\gamma^{-1}\bigg(1-\text{exp}\Big(-\gamma\big(Z_{t}^{k,\tau,i}+\delta_{t}^{k,\tau,i}\big)\Big)\bigg)\mathrm{d}N_{t}^{\delta,k,\tau,i}\Bigg)
\end{align*}
is a true martingale. This implies that
\begin{align*}
&J_{\text{MM}}(\delta,\xi)=\mathbb{E}^{\delta}\Big[-\mathrm{e}^{-\gamma\overline{Y}_{T}}\Big]\\
&\hspace{5em}=-\mathrm{e}^{-\gamma Y_{0}}-\mathbb{E}^{\delta}\bigg[\int_{0}^{T}\gamma\mathrm{e}^{-\gamma\overline{Y}_{t^{-}}}\big(H(Z_{t},Q_{t})-h(\delta,Z_{t},Q_{t})\big)\mathrm{d}t\bigg] \\
&\hspace{5em}\leq -\mathrm{e}^{-\gamma Y_{0}}.
\end{align*}
In addition to this, the previous inequality becomes an equality if and only if $\delta$ is chosen as the maximizer of the Hamiltonian $h$ thus leading to the optimal quotes provided in Theorem \ref{Theorem Market Maker}. So we deduce $J_{\text{MM}}(\delta,\xi)=-\mathrm{e}^{-\gamma Y_{0}}$. Finally we have $V_{\text{MM}}(\xi)=-\mathrm{e}^{-\gamma Y_{0}}$ with optimal response $(\hat{\delta}_{t})_{t\in [0,T]}$.

\subsection{Proof of Theorem \ref{Main theorem}}\label{proof:principal_verification}
We define for any map $v:[0,T]\times\mathbb{Z}^{\#\mathcal{K}\times\#\mathcal{T}}\longrightarrow (-\infty,0)$, $x\in \mathbb{R}$, $(k,\tau)\in\mathcal{K}\times\mathcal{T}$ and $(t,q)\in[0,T]\times\Z^{\#\mathcal{K}\times\#\mathcal{T}}$
\begin{align*}
v(t,q\ominus_{K_i,T_j} x):=v(t,q^{K_1,T_1},q^{K_1,T_2},\dots,q^{K_{i},T_{j+1}},q^{K_i,T_j}\!-\!x,q^{K_{i},T_{j+2}},\dots,q^{K_n,T_m}).
\end{align*}
The Hamilton-Jacobi-Bellman equation of the stochastic control problem \eqref{Reduced exchange problem} is given by
\begin{align}
\label{HJB Equation Principal First Form}
0=\partial_{t}v(t,q) +  \mathcal{H}_{E}\big(t, q,  v(t, \cdot)\big) , ~~ v(T, q) = -1,
\end{align}
with
$$
\mathcal{H}_{E}\big(t, q, v(t, \cdot)\big)  =  \sup_{z \in \mathcal{Z}} h_E\big( t, q, s, z, v(t, \cdot)\big),
$$
\begin{align*}
h_{E}\big( t, q, s, z , v(t, \cdot)\big) =& v(t,q) \Big(\frac{\eta}{2}\gamma\sigma^{2}\Big(\sum_{(k,\tau)\in \mathcal{K}\times\mathcal{T}}\Delta^{k,\tau}(z^{C^{k,\tau}}+q^{k,\tau})\Big)^{2} +\frac{\eta^{2}}{2}\sigma^{2}\Big(\sum_{(k,\tau)\in \mathcal{K}\times\mathcal{T}}\Delta^{k,\tau}z^{C^{k,\tau}}\Big)^{2}\Big) \\
&\hspace{1em}+\underset{i=a,b}{\sum} ~\sum_{(k,\tau)\in \mathcal{K}\times\mathcal{T}}  h^i_{k, \tau}\Big(t,  z^{k, \tau, i}, v(t,q ),v\big(t,q\ominus_{k,\tau}\phi(i) \big) \Big)\mathbf{1}_{\phi(i)\mathcal{Q}>-\overline{q}}
\end{align*}
and
\begin{align*}
h^i_{k, \tau}(t, z, y,y' ) &= \big( y'x_1^{k, \tau}e^{az} - yx_2e^{bz}\big) O_{k, \tau}
\end{align*}
where 
$$
x_1^{k, \tau} = e^{-\eta (c^{k, \tau}+\omega\big(\delta^{k, \tau}_{\infty}-\gamma^{-1}\log(1+\frac{\sigma\gamma}{C})\big)},~~ x_2 = \big(1 + \eta \frac{1 - (1+\frac{\sigma\gamma}{C})^{-1}}{\gamma} \big),~~ O_{k, \tau} = (1 + \frac{\sigma \gamma}{C})^{-\frac{C}{\gamma \sigma}}e^{-\frac{C}{\sigma}f^{k, \tau}},
$$
and
$$
a = \eta(1 - \omega) + \frac{C}{\sigma}, ~~b =  \frac{C}{\sigma}.
$$
Tedious but straightforward computations lead to the following optimizers:
\begin{align*}
\nonumber
& z^{\star k,\tau,i}:= \frac{1}{a-b}\log\big( \frac{bx_2v\big(t,q\big)}{ax_1^{k, \tau}v\big(t,q\ominus_{k,\tau}\phi(i) \big)}\big), \\
& z^{\star C^{k,\tau}}:=-\frac{\gamma}{\gamma+\eta}q^{k,\tau}.
\end{align*}
Note that from these computations, we get that this above optimization makes sense only if we assume that there exists $\delta_{\infty}$ large enough so that for $i= a$ or $b$, $k\in \mathcal{K},\tau\in \mathcal{T}$ and any $t,q$:
\begin{equation}
\label{eq:bound_condition}
\big| -z_{t}^{\star k,\tau,i}(t, q)+\frac{1}{\gamma}\textup{log}\Big(1+\frac{\sigma \gamma}{C}\Big)\big|<\delta_{\infty}.
\end{equation}
We will check that we can make such choice at the end of the verification argument. Equation \eqref{HJB Equation Principal First Form} is rewritten as
\begin{equation}
\label{PDE Principal}
0=\partial_{t}v(t,q) + v(t,q)  \frac{\gamma \eta^2}{\gamma+\eta}\frac{ \sigma^2 }{2}  \Big(\!\!\!\!\!\!\sum_{(k,\tau)\in \mathcal{K}\times\mathcal{T}}\!\!\!\!\!\!\Delta^{k,\tau}q^{k,\tau}\Big)^{2}\!\!-v(t,q)\underset{i=a,b}{\sum}\sum_{(k,\tau)\in \mathcal{K}\times\mathcal{T}}\!\!\!\!\!\tilde{C}^{k,\tau}\Big(\frac{v(t,q)}{v(t,q\ominus_{k,\tau}\phi(i))}\Big)^{\frac{C}{\sigma\eta(1-\omega)}}  \mathbf{1}_{\phi(i)\mathcal{Q}>-\overline{q}},
\end{equation}
where
\begin{align*}
\tilde{C}^{k, \tau} =& x_2(\frac{x_2}{x_1^{k, \tau}})^{ \frac{a}{a - b} } O_{k, \tau}\big( (\frac{b}{a})^{\frac{b}{a - b}}-  (\frac{b}{a})^{\frac{a}{ a-b}}  \big)>0.  
\end{align*}
We now make the ansatz $v(t, q) = u(t, \mathcal{Q})$. We derive the following PDE
\begin{equation}
\label{PDE Principal bis}
 0=\partial_{t}u(t,\mathcal{Q}) + u(t,\mathcal{Q})   \frac{\gamma \eta^2}{\gamma+\eta}\frac{ \sigma^2 }{2}   \mathcal{Q}^{2}-u(t,\mathcal{Q})\underset{i=a,b}{\sum}\sum_{(k,\tau)\in \mathcal{K}\times\mathcal{T}}\tilde{C}^{k,\tau}\Big(\frac{u(t,\mathcal{Q})}{u(t,\mathcal{Q}-\Delta^{k,\tau}\phi(i))}\Big)^{\frac{C}{\sigma\eta(1-\omega)}}\mathbf{1}_{\phi(i)\mathcal{Q}>-\overline{q}},
\end{equation}
with terminal condition $u(T, \mathcal{Q}) = -1$.\\

Using the classical change of variable $\tilde{u}:=(-u)^{-\frac{C}{\sigma\eta(1-\omega)}}$, PDE \eqref{PDE Principal bis} becomes

\begin{equation}
\label{PDE Principal Ter}
 0=\partial_{t}\tilde{u}(t,\mathcal{Q}) - \tilde{u}(t,\mathcal{Q})   \frac{C\gamma \eta}{\gamma+\eta}\frac{ \sigma }{2(1-\omega)}   \mathcal{Q}^{2}+\underset{i=a,b}{\sum}\sum_{(k,\tau)\in \mathcal{K}\times\mathcal{T}}\hat{C}^{k,\tau}\tilde{u}(t,\mathcal{Q}-\Delta^{k,\tau}\phi(i))\mathbf{1}_{\phi(i)\mathcal{Q}>-\overline{q}},
\end{equation}
where $\hat{C}^{k,\tau}:=\tilde{C}^{k,\tau}\frac{C}{\sigma\eta(1-\omega)}$. Eventually Cauchy-Lipschitz theorem provides existence and uniqueness of a bounded solution to \eqref{PDE Principal Ter} and so to \eqref{PDE Principal}.\\

For the verification argument, we first introduce a technical lemma. 

\begin{lemma} \label{Lemma Verification argument}
Let $Z\in \mathcal{Z},\xi=Y_{T}^{\hat{Y}_{0},Z}$. We define
\begin{align*}
K_{t}^{Z}:= \textup{exp}\bigg(-\eta\Big(\underset{i=a,b}{\sum}\sum_{(k,\tau)\in \mathcal{K}\times\mathcal{T}}c^{k,\tau}N_{t}^{k,\tau,i}-\int_{0}^{t}\omega\big(\Delta^{i}(Z_{s}^{k,\tau,i})-\delta^{k,\tau}_{\infty}\big)\mathrm{d}N_{s}^{k,\tau,i}-Y_{t}^{Y_{0},Z}\Big)\bigg),\quad t\in [0,T].
\end{align*} 
There exists $\epsilon>0$ such that
\begin{align*}
\underset{t\in [0,T]}{\textup{sup}}\mathbb{E}^{\hat\delta(Z)}\left[|K_{t}^{Z}|^{1+\epsilon}\right]<+\infty,
\end{align*}
where $\hat\delta(Z)$ is defined in Theorem \ref{Theorem Market Maker}.
\end{lemma}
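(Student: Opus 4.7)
The plan is to write $K_t^Z$ as a product of three factors and bound their $(1+\epsilon)$-th moments separately via Hölder's inequality. The key identity, coming from the proof of Theorem \ref{Theorem Market Maker}, is $Y_t^{Y_0,Z}=\overline{Y}_t-PL_t^{\hat{\delta}(Z)}$, together with the reduction $\sum_{k,\tau}\int_0^t Q_s^{k,\tau}\mathrm{d}C_s^{k,\tau}=\sigma\int_0^t\mathcal{Q}_s\mathrm{d}W_s$. Substituting into the definition of $K_t^Z$ yields
\begin{align*}
K_t^Z=\exp(\eta\overline{Y}_t)\cdot\exp(-\eta\mathcal{I}_t)\cdot\exp\!\Bigl(-\eta\sigma\!\int_0^t\mathcal{Q}_s\mathrm{d}W_s\Bigr),
\end{align*}
where $\mathcal{I}_t:=\sum_{i,k,\tau}\int_0^t\bigl[c^{k,\tau}+(1-\omega)\Delta^i(Z_s^{k,\tau,i})+\omega\delta_\infty^{k,\tau}\bigr]\mathrm{d}N_s^{k,\tau,i}$ has a uniformly bounded integrand thanks to $|\Delta^i(Z)|\le\delta_\infty$.

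Hölder's inequality with exponents $p_1,p_2,p_3>1$ satisfying $\sum p_i^{-1}=1$ reduces the problem to three uniform-in-$t$ moment bounds. For the first factor I would invoke the dynamics of $e^{-\gamma\overline{Y}_t}$ derived in the proof of Theorem \ref{Theorem Market Maker}: the drift $H-h$ vanishes identically at the maximizer $\hat{\delta}(Z)$, so under $\P^{\hat{\delta}(Z)}$ the process $e^{-\gamma\overline{Y}_t}$ is a local martingale, and since the same Hölder plus intensity-boundedness argument used there yields uniform integrability, it is a true positive martingale. Applying Jensen's inequality with the convex function $x\mapsto x^{-p_1(1+\epsilon)\eta/\gamma}$ on $(0,\infty)$ then turns $\exp\bigl(p_1(1+\epsilon)\eta\overline{Y}_t\bigr)$ into a $\P^{\hat{\delta}(Z)}$-submartingale, so its supremum over $t\in[0,T]$ is bounded by $\E^{\hat{\delta}(Z)}\bigl[\exp\bigl(p_1(1+\epsilon)\eta(\xi+PL_T^{\hat{\delta}(Z)})\bigr)\bigr]$. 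A second Hölder split isolates $\xi$, controlled by Condition \eqref{Condition wellposedness control problems}, from $PL_T^{\hat{\delta}(Z)}$, which is of the same nature as the two factors treated next.

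The second factor is bounded through $|\mathcal{I}_t|\le M_0\sum_{i,k,\tau}N_t^{k,\tau,i}$ for a deterministic constant $M_0$: under $\P^{\hat{\delta}(Z)}$ each intensity is dominated by $Ae^{C\delta_\infty/\sigma}$, so exponential moments of any order of the total jump count are uniformly bounded on $[0,T]$ via the standard Dolèans-Dade exponential martingale for counting processes with bounded intensity. The third factor is controlled by Novikov's criterion: the indicator $\mathbf{1}_{\{\phi(i)\mathcal{Q}_{s^-}>-\overline{q}\}}$ embedded in every intensity forces $|\mathcal{Q}_s|\le\overline{q}+\max_{k,\tau}\Delta^{k,\tau}$, hence the integrand of the Brownian integral is uniformly bounded and all its exponential moments are finite uniformly in $t$. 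Compatibility of the Hölder exponents is ensured by the strict inequality $\eta'>\eta$ in Condition \eqref{Condition wellposedness control problems}: taking $\epsilon$ small and $p_1$ and the inner $\xi$-Hölder exponent close enough to $1$ keeps the multiplier of $\eta$ in the $\xi$-moment strictly below $\eta'$. The main obstacle is precisely the Brownian integral term: it would be hopeless if the individual inventories $Q^{k,\tau}$ appeared separately, since exponential moments of squared jump counts are not available, and the delta-weighted collapse into the aggregate $\mathcal{Q}$ defined in \eqref{Inventory Process} is exactly what makes Novikov applicable.
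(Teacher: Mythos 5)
Your proposal is correct and fills a gap that the paper itself leaves open: the paper's ``proof'' of this lemma consists entirely of the sentence ``The proof is borrowed from \cite{el2018optimal}.'' Your reconstruction is the natural one, and you correctly identify the three structural ingredients that make it work. The decomposition $Y_t^{Y_0,Z}=\overline{Y}_t-PL_t^{\hat\delta(Z)}$ together with $\sum_{k,\tau}Q^{k,\tau}_u\mathrm{d}C_u^{k,\tau}=\sigma\mathcal{Q}_u\mathrm{d}W_u$ does split $K_t^Z$ into exactly the three factors you list, and each is controlled by a separate mechanism: the first via the $\mathbb{P}^{\hat\delta(Z)}$-martingale property of $e^{-\gamma\overline Y}$ at the Hamiltonian maximizer (the drift $H-h$ vanishes there, and uniform integrability, already established in the proof of Theorem \ref{Theorem Market Maker}, upgrades the local martingale to a true one) plus conditional Jensen with the convex power map and Condition \eqref{Condition wellposedness control problems}; the second via the uniform intensity bound $A\exp(C\delta_\infty/\sigma)$ and the bounded integrand of $\mathcal I_t$; the third via Novikov, which applies precisely because the indicator $\mathbf{1}_{\{\phi(i)\mathcal{Q}_{s^-}>-\overline q\}}$ forces $|\mathcal Q|\le \overline q+\max_{k,\tau}\Delta^{k,\tau}$. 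Your closing observation---that passing to the aggregate inventory $\mathcal Q$ rather than keeping the $Q^{k,\tau}$ separate is exactly what makes Novikov applicable---is the right structural insight and is indeed why the paper works with $\mathcal Q$. Two minor points worth flagging for a polished write-up: (i) the Jensen/submartingale step should be accompanied by a localization or a remark that conditional Jensen holds in $[0,\infty]$, so that $\sup_t\mathbb E[\phi(M_t)]\le\mathbb E[\phi(M_T)]$ requires only finiteness of the right-hand side; and (ii) the Hölder-exponent budget (choosing $p_1,r_1\downarrow 1$ and $\epsilon\downarrow 0$ so that $r_1p_1(1+\epsilon)\eta<\eta'$ while $p_2,p_3<\infty$) is tight but consistent, as you say, and deserves one explicit line since the rest of the argument hinges on it.
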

The proof is borrowed from \cite{el2018optimal}. We now verify that the unique solution $v$ of Equation \eqref{HJB Equation Principal First Form} coincides at any point $(0, Q_0)$ with the value $v_0^E$ of the reduced problem \eqref{Reduced exchange problem}. We also prove that in \eqref{Reduced exchange problem}, the maximum is achieved for feedback controls issued from \eqref{exchange optimal control}. \\

Using It\^o's formula we get
\begin{align*}
\mathrm{d}[v(t, Q_t)K^{Z}_t] = K^Z_{t^-}\big[ & \Big(h_{E}\big(t, Q_{t^-}, S_t, Z_t, v(t, \cdot) \big) - \mathcal{H}_{E}\big(t, Q_{t^-}, v(t, \cdot)\big)\Big) \mathrm{d}t \\
&+  v(t, Q_{t^-})   \eta \sum_{k = 1}^{ \mathcal{K} } \sum_{\tau = 1}^{ \mathcal{T} } Z_t^{C^{k, \tau}} \mathrm{d}C_t^{k, \tau} \\
&  +  \sum_{i = a, b}  \sum_{k = 1}^{ \mathcal{K} } \sum_{\tau = 1}^{ \mathcal{T} } \big( e^{-\eta (c^{k, \tau} - Z^{k, \tau, i}_{t})} v( t, Q^{k, \tau}_{t^-} - \phi(i)) - v(t, Q_{t^-}) \big) \mathrm{d} N^{\hat{\delta}(Z), k, \tau, i}_t \big].
\end{align*}
The process $K^Z$ is uniformly integrable on $[0, T]$ according to \eqref{Strong integrability MM}, H\"older inequality and the boundedness of the intensity of the processes $N^{k, \tau, i}$. Moreover $v$ being uniformly bounded as a consequence of the Cauchy-Lipschitz theorem, the process $\big(v(t, Q_t)K^{Z}_t\big)_{t \in [0, T]}$ is a $\mathbb{P}^{\hat{\delta}(Z)}$ supermartingale and the local martingale term in the above equation is a true martingale. Hence
\begin{equation}
\label{eq:proof_verification_a}
v(0, Q_0)\geq \mathbb{E}^{\hat{\delta}(Z)}[v(T, Q_T)K^Z_T] = -\mathbb{E}^{\hat{\delta}(Z)}[K^Z_T].
\end{equation}
Since $Z\in \mathcal{Z}$ is arbitrary, we get
$$
v(0 ,Q_0) \geq \underset{Z \in \mathcal{Z} }{ \sup }-\mathbb{E}^{\hat{\delta}(Z)}[K^Z_T] = v_0^{E}.
$$
The feedback form of $Z$, issued from \eqref{exchange optimal control}, being bounded according to Equation \eqref{eq:bound_condition}, it is admissible. Considering the process $Z^\star$, we get an equality instead of an inequality in the above equation. \\ 

For consistency we now check that there does exist some constant $\delta_{\infty}$ such that \eqref{eq:bound_condition} is satisfied. In the same vein as in Step 2 of the proof of Theorem \ref{Theorem Market Maker}, we can show that for any $t$ and $q$, $v(t, q)$ is negative. Because of the compactness of the domain of $v$, the function is uniformly negative: we can find $\varepsilon$ such that $v < -\varepsilon$ on $[0, T]\times \mathcal{D}$. Consequently $ \log\Big( \frac{v(t, q)}{v\big(t, q^{k, \tau} \ominus_{k,\tau} \phi(i)\big)} \Big)$ is uniformly bounded in $i,~k, ~\tau,~t$ and $q$. Thus we can always choose a $\delta_{\infty}$ satisfying \eqref{eq:bound_condition}.
\endgroup

\bibliographystyle{plain}
\bibliography{biblio.bib}

\end{document}